\begin{document}

\author{Sergio Castillo-Perez$^{1}$ \and Joaquin Garcia-Alfaro$^{2}$}

\title{On the Use of Latency Graphs for\\ the Construction of Tor Circuits}

\institute{Department of Information and Communications Engineering
    (dEIC), \\
    Autonomous University of Barcelona, 08193 Bellaterra, Spain\\
{\tt Sergio.Castillo@uab.es}\\
\bigskip
  \and
Institut MINES-TELECOM, TELECOM Bretagne, \\
    35576 C\'esson-S\'evign\'e, France.\\
{\tt joaquin.garcia-alfaro@acm.org}\\
}

\maketitle

\begin{abstract}
The use of anonymity-based infrastructures and anonymisers is a
plausible solution to mitigate privacy problems on the Internet. Tor
(short for {\em The onion router}) is a popular low-latency anonymity
system that can be installed as an end-user application on a wide
range of operating systems to redirect the traffic through a series of
anonymising proxy circuits. The construction of these circuits
determines both the latency and the anonymity degree of the Tor
anonymity system. While some circuit construction strategies lead to
delays which are tolerated for activities like Web browsing, they can
make the system vulnerable to linking attacks. We evaluate in this
paper three classical strategies for the construction of Tor circuits,
with respect to their de-anonymisation risk and latency performance.
We then develop a new circuit selection algorithm that considerably
reduces the success probability of linking attacks while keeping a
good degree of performance. We finally conduct experiments on a
real-world Tor deployment over PlanetLab. Our experimental results
confirm the validity of our strategy and its performance increase for
Web browsing.\\

\textbf{Keywords}: Anonymity, Privacy, Entropy, Graphs, Algorithmics.
\end{abstract}


\section{Introduction}
\label{sec:intro}
Several anonymity designs have been proposed in the literature with
the objective of achieving anonymity on different network
technologies. From simple pseudonyms \cite{troncoso2011} to complex
unstructured protocols \cite{bansod08}, anonymity solutions can offer
either strong anonymity with high latency (useful for high latency
services, such as email and usenet messages) or weak anonymity with
low-latency (useful, for instance, for Web browsing). The most
widely-used low-latency solution for traditional Internet
communications is based on anonymous mixes and onion routing
\cite{Syverson1997}. It is distributed as a free software
implementation known as Tor (\textit{The onion router}
\cite{Dingledine2004}). It can be installed as an end-user application
on a wide range of operating systems to redirect the traffic of
low-latency services with a very acceptable overhead. Tor's objective
is the protection of privacy of a sender as well as the contents of
its messages. To do so, it transforms cryptographically those messages
and mixes them via a circuit of routers. The circuit routes the
message in an unpredictable way. The content of each message is
encrypted for every router in the circuit, with the objective of
achieving anonymous communication even if a set of routers are
compromised by an adversary. Upon reception, a router decrypts the
message using its private key to obtain the following hop and
cryptographic material on the path. This path is initially defined at
the beginning of the process. Only the entity that creates the circuit
knows the complete path to deliver a given message. The last router
of the path, the \textit{exit} node, decrypts the last layer and delivers
an unencrypted version of the message to its target.

Tor allows the construction of anonymous channels with latency enough
to route traffic for services like the Web \cite{Kopsell2006}.
However, it might still impact its performance depending on the
specific strategy used for the establishment of the channel. In this
paper, we address the influence of circuit construction strategies on
the anonymity degree of Tor. We first provide a formal definition of
the selection of Tor nodes process, of the adversary model targeting
the communication anonymity of Tor users, and an analytical expression
to compute the anonymity degree of the Tor infrastructure based on the
circuit construction criteria. Based on these definitions, we evaluate
three classical strategies, with respect to their de-anonymisation
risk, and regarding their performance for anonymising Internet
traffic. We then present the construction of a new circuit selection
algorithm that aims at reducing the success probability of linking
attacks while providing enough performance for low-latency services. A
series of experiments, conducted on a real-world Tor deployment over
PlanetLab \cite{leon2011} confirm the validity of the new strategy,
and show its superiority over the classical ones.\\

\noindent \textbf{Paper organisation ---} Section \ref{sec:rationale}
presents the rationale of our work. Section \ref{sec:strategies}
evaluates the anonymity degree of three traditional strategies for the
construction of Tor circuits. Section \ref{sec:new-strategy} presents
our new strategy. Section \ref{sec:new-strategy-evaluation} evaluates
the anonymity degree of our solution. Section \ref{sec:experiments}
experimentally evaluates the latency performance of each strategy
using PlanetLab. Section \ref{sec:sota} surveys related work. Section
\ref{sec:conclusion} concludes the paper.

\section{Rationale}
\label{sec:rationale}

In this section, we introduce the notation, models, and core definitions that
are necessary to understand the rationale of our work.

\subsection{Tor circuit}

\label{sec:notation}
Formally, we can describe a connection using the Tor network as follows. First,
we define a client node $s$ called a \textit{client} or \textit{onion proxy},
and a \textit{destination server} node $d$ which we want to interconnect to
exchange data in an anonymous manner. Let $N$ be the set of nodes deployed in
the Tor network, and $n=|N|$ the cardinality of the set. Let node $e \in N$
denote a specified node, called the \textit{entrance node}, and $x \in N$ the
\textit{exit node}. Then, a \textit{Tor circuit} is a sequence of nodes
$C=\langle s,e,r_{1},r_{2}, ... ,r_{l},x \rangle$, where $r_{i} \in N$ is any
\textit{intermediary node}. The nodes $e$, $x$, and $r_i$, $i \in \{1, ...,
l\}$, are also known as \textit{onion routers}. We define the \textit{path of a
circuit} as the set of links (i.e., network connections) $P=\{a_{1}, ...,
a_{l+2}\}$ associated to the \textit{Tor circuit}, where $a_{1}=(s, e)$,
$a_{2}=(e, r_{1})$, $a_{3}=(r_{1}, r_{2})$,~...~, $a_{l+1}=(r_{l-1},r_{l})$,
$a_{l+2}=(r_l, x)$. The value $|P|=l+2$ is called the \textit{length of the
circuit}. A \textit{connection using the Tor network} is composed by the client
and destination nodes interconnected through a Tor circuit as follows:

\begin{center}
$\underbrace{s \xrightarrow{a_1} e \xrightarrow{a_2} r_1
    \xrightarrow{a_3} r_2 \xrightarrow{a_4}... \xrightarrow{a_{l}}
    r_{l-1} \xrightarrow{a_{l+1}} r_l \xrightarrow{a_{l+2}}
    x}_{\text{Tor network}} \rightarrow d$
\end{center}

\subsection{Adversary model}
\label{sec:adv_model}
The adversary assumed in our work relies on the threat mo\-del proposed by
Syverson \textit{et al.} in \cite{Syverson2001}. Such a pragmatic model
considers that, regardless of the number of onion routers in a circuit, an
adversary controlling the entrance and exit nodes would have enough information
in order to compromise the communication anonymity of a Tor client. Indeed, when
both nodes collude, and given that the entry node knows the source of the
circuit, and the exit node knows the destination, they can use traffic analysis
to link communication over the same circuit \cite{hopper2010}.


Assuming the model proposed in \cite{Syverson2001}, then an adversary who
controls $c > 1$ nodes over the $n$ nodes in the Tor network can control an
entry node with probability $(\frac{c}{n})$, and an exit node with probability
$(\frac{c}{n})$. This way, the adversary may de-anonymise the traffic flowing on
a controlled circuit (i.e., a circuit whose entry and exit nodes are controlled
by the adversary) with probability $(\frac{c}{n})^2$ if the length of the
circuit is greater than two; or $\frac{c(c-1)}{n^2}$ if the length of the
circuit is equal to two (cf. \cite{Syverson2001} and citations thereof).
Adversaries can determine when the nodes under their control are either entry or
exit nodes for the same circuit stream by using attacks such timing-based
attacks \cite{Back2001}, fingerprinting \cite{Murdoch2006}, and several other
existing attacks.

Let us observe that the aforementioned probability of success assumes
that the probability of a node from being selected on a Tor circuit is
randomly uniform, that is, the boundaries provided in
\cite{Syverson2001} only apply to the standard (random) selection of
nodes, hereinafter denoted as \textit{random selection of nodes
  strategy}. Given that the goal of our paper is to evaluate
alternative selection strategies, we shall adapt the model. Therefore,
let $p_1$, $p_2$, $p_3$, $\ldots$, $p_c$ be the corresponding
selection probabilities assigned by the circuit construction algorithm
to each node controlled by the adversary, then the probability of
success corresponds to the following expression:
\begin{equation*}
(p_1 + p_2 + p_3 + \ldots + p_c) \cdot (p_1 + p_2 + p_3 + \ldots + p_c)
\end{equation*}
that can be simplified as:
\begin{equation*}
\Big ( \sum_{i=1}^{c}p_i \Big )^2
\end{equation*}
Following is the analysis.

\begin{theorem}
\label{theo:probattack}
Let $c$ be the number of nodes controlled by the adversary. Let the
Tor client use a selection criteria which, for a certain circuit,
every node selection is independent. Let $p_1$, $p_2$, $p_3$,
$\ldots$, $p_c$ be the corresponding selection probabilities assigned
by the circuit construction algorithm to each node controlled by the
adversary. Then, the success of the adversary to compromise the
security of the circuit is bounded by the following probability:\\
\begin{equation*}
\Big ( \sum_{i=1}^{c}p_i \Big )^2
\end{equation*}
\end{theorem}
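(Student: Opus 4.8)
The plan is to show the adversary's success probability equals the probability that both the entry node and the exit node are among the $c$ adversary-controlled nodes, and then express this using the given per-node selection probabilities. The adversary model (following Syverson \textit{et al.}) tells us that compromise occurs precisely when the adversary controls \emph{both} endpoints of the circuit; so the quantity to bound is
\begin{equation*}
\Pr[\text{entry controlled} \;\wedge\; \text{exit controlled}].
\end{equation*}

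First I would let $A$ denote the event that the selected entry node lies in the adversary's set, and $B$ the event that the selected exit node lies in that set. Writing $p_i$ for the probability that node $i$ (adversary-controlled) is chosen in a given position, the event $A$ is the disjoint union over the $c$ controlled nodes of ``node $i$ is selected as entry,'' so $\Pr[A] = \sum_{i=1}^{c} p_i$, and likewise $\Pr[B] = \sum_{i=1}^{c} p_i$, since the per-node selection probabilities are the same at each position. The key structural hypothesis is that the selection criteria makes \emph{every node selection independent} for a fixed circuit; this is exactly what lets me multiply:
\begin{equation*}
\Pr[A \wedge B] = \Pr[A]\cdot \Pr[B] = \Big( \sum_{i=1}^{c} p_i \Big)\Big( \sum_{i=1}^{c} p_i \Big) = \Big( \sum_{i=1}^{c} p_i \Big)^2.
\end{equation*}

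The remaining subtlety — and the step I expect to be the main obstacle — is justifying why this product is a \emph{bound} rather than an exact value, and why we may treat $\Pr[A]$ and $\Pr[B]$ as both equal to $\sum_{i=1}^{c} p_i$. The upper-bound aspect arises because the entry and exit selections can in practice share constraints (a node already chosen as entry is typically excluded from being reused as exit), so the truly independent treatment overcounts the favourable cases slightly; under the stated independence assumption it is an equality, but without it the product dominates the actual joint probability, giving the claimed bound. I would therefore state explicitly that independence is invoked to factor the joint event, note that in the degenerate length-two case the reuse constraint recovers the sharper $c(c-1)/n^2$-type expression already mentioned in the text, and conclude that in general $\Pr[A\wedge B]\le \bigl(\sum_{i=1}^{c} p_i\bigr)^2$, which is the asserted bound.
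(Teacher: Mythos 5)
Your proposal is correct and follows essentially the same route as the paper's own proof: the sum rule gives $\Pr[\text{entry controlled}]=\Pr[\text{exit controlled}]=\sum_{i=1}^{c}p_i$, and the stated independence of node selections lets you multiply to obtain $\bigl(\sum_{i=1}^{c}p_i\bigr)^2$. Your closing remark—that without independence the reuse-exclusion constraint makes the product an upper bound, recovering the sharper length-two expression—is a welcome justification of the word ``bounded'' that the paper's proof leaves implicit, but it does not change the underlying argument.
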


\medskip

\begin{proof}
The proof is direct by using the sum and product rules of probability theory,
and taking into account that the selection of every node is an independent
event. First, the probability of selecting the entrance or exit node in the set
of nodes controlled by the adversary is (sum rule):
\begin{equation*}
\sum_{i=1}^{c}p_i
\end{equation*}
Then, the probability of selecting, at the same time, a controlled entrance and
exit node in a circuit is (product rule):
\begin{equation*}
\Big (\sum_{i=1}^{c}p_i \Big ) \Big (\sum_{i=1}^{c}p_i \Big )=\Big (
\sum_{i=1}^{c}p_i \Big )^2
\end{equation*}
\end{proof}

\begin{corollary}
\label{coro:syversoncastillo}
The Syverson \textit{et al.} success probability boundary in
\cite{Syverson2001}, i.e., $(\frac{c}{n})^2$, is equivalent to the boundary
defined in Theorem~\ref{theo:probattack} when the circuit selection criteria is
a random selection of nodes.
\end{corollary}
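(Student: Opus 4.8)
The plan is to establish the claimed equivalence by specialising the general bound from Theorem~\ref{theo:probattack} to the random selection strategy and showing it collapses to $(\frac{c}{n})^2$. First I would recall the defining property of the \emph{random selection of nodes strategy}: every node in the Tor network $N$ is chosen with uniform probability, so each of the $n=|N|$ nodes receives selection probability $\frac{1}{n}$. In particular, each of the $c$ nodes controlled by the adversary carries selection probability $p_i = \frac{1}{n}$ for all $i \in \{1,\ldots,c\}$.

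Next I would substitute these uniform probabilities directly into the bound furnished by Theorem~\ref{theo:probattack}. The inner sum becomes
\begin{equation*}
\sum_{i=1}^{c} p_i = \sum_{i=1}^{c} \frac{1}{n} = \frac{c}{n},
\end{equation*}
and squaring this, as dictated by the theorem, yields
\begin{equation*}
\Big ( \sum_{i=1}^{c} p_i \Big )^2 = \Big ( \frac{c}{n} \Big )^2 = \Big ( \frac{c}{n} \Big )^2.
\end{equation*}
This is precisely the Syverson \textit{et al.} boundary for circuits of length greater than two, establishing the equivalence.

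The one point requiring a word of care is the independence hypothesis of Theorem~\ref{theo:probattack}: the theorem assumes that the selection of each node is an independent event. I would note that under the random (uniform) selection strategy this hypothesis is satisfied by construction, since each node is drawn independently and identically from $N$, so the theorem applies without modification. A secondary subtlety worth flagging is the boundary case reported in the excerpt, where Syverson \textit{et al.} give $\frac{c(c-1)}{n^2}$ for a circuit of length exactly two (reflecting sampling without replacement of the entry and exit nodes). Since Theorem~\ref{theo:probattack} already models the generic circuit with an independent, effectively with-replacement selection, I would restrict the claim of equivalence to circuits of length greater than two, where the $(\frac{c}{n})^2$ bound is the relevant one; this matches the scope of the corollary as stated and avoids conflating the two regimes.

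I do not anticipate a genuine obstacle here, as the result is a direct arithmetic specialisation. The only thing to be careful about is making explicit \emph{why} $p_i = \frac{1}{n}$ follows from ``random selection'' and ensuring the reader sees that the independence premise of the parent theorem is met, rather than asserting the numerical collapse without justification.
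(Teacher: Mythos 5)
Your proposal is correct and follows essentially the same route as the paper's own proof: substitute the uniform probability $p_i = \frac{1}{n}$ of the random selection strategy into the bound $\bigl(\sum_{i=1}^{c} p_i\bigr)^2$ from Theorem~\ref{theo:probattack} and simplify to $\bigl(\frac{c}{n}\bigr)^2$. Your additional remarks on the independence hypothesis and on restricting the equivalence to circuits of length greater than two (excluding the $\frac{c(c-1)}{n^2}$ regime) are sound refinements that the paper leaves implicit, but they do not change the argument.
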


\begin{proof}
Let $N$ be the set of nodes deployed in a Tor network with $n=|N|$,
and let $A \subseteq N$ be the subset of nodes controlled by an
adversary with $c=|A|$. The probability of a node $n_i \in N$ to be
selected is $p_i=\frac{1}{n}$. Then, by applying it to the boundary
defined in Theorem~\ref{theo:probattack}, we obtain:
\begin{equation*}
\Big (\sum_{i=1}^{c}p_i \Big )^2=(c \cdot p_i)^2=\Big (c \frac{1}{n}
\Big)^2= \Big (\frac{c}{n} \Big)^2
\end{equation*}
\end{proof}

\subsection{Anonymity degree}

Most work in the related literature has used the (Shannon) entropy
concept to measure the anonymity degree of anony\-misers like Tor (cf.
\cite{Diaz2002,Danezis2002} and citations thereof). We recall that the
entropy is a measure of the uncertainty associated with a random
variable, that can efficiently be adapted to address new networking
research problems \cite{tomlin00,fan09,tellenbach11}. In this paper,
the entropy concept is used to determine how predictable is the
selection of the nodes in accordance to a given strategy or, in other
words, how easy is to violate the anonymity in relation to the
adversary model defined in Section~\ref{sec:adv_model}. Formally,
given a probability space $(\Omega, \mathcal{F}, \mathbb{P})$ with a
sample space $\Omega=\{\omega_1,\omega_2, ..., \omega_n\}$ where
$\omega_i$ denotes the outcome of the node $n_i \in N$ ($\forall i \in
\{1, ..., n\}$), a $\sigma$-field $\mathcal{F}$ of subsets of
$\Omega$, and a probability measure $\mathbb{P}$ on $(\Omega,
\mathcal{F})$, we consider a random discrete variable $X$ defined as
$X:\Omega \rightarrow \mathbb{R}$ that takes values in the countable
set $\{x_1, x_2, ..., x_n\}$, where every value $x_i\in \mathbb{R}$
corresponds to the node $n_i \in N$. The discrete random variable X
has a $\mathrm{pmf}$ (probability mass function) $f: \mathbb{R}
\rightarrow [0,1]$ given by $f(x_i) = p_i = \mathbb{P}(X=x_i)$. Then,
we define the entropy of a discrete random variable (i.e., the entropy
of a Tor network) as:
\begin{equation}
H(X) = - \displaystyle \sum_{i=1}^{n}p_i\cdot log_{2}(p_i)
\label{eq:entropy}
\end{equation}
Since the entropy is a function whose image depends on the number of nodes, with
property $H(X) \ge 0$, it cannot be used to compare the level of anonymity of
different systems. A way to avoid this problem is as follows. Let $H_M(X)$ be
the maximal entropy of a system, then the entropy that the adversary may obtain
after the observation of the system is characterised by $H_M(X) - H(X)$. The
maximal entropy $H_M(X)$ of the network applies when there is a uniform
distribution of probabilities (i.e., $\mathbb{P}(X=x_i)=p_i= \frac{1}{n}$,
$\forall i\in \{1,...,n\}$), and this leads to $H(X)=H_M(X)=log_2(n)$. The
anonymity degree shall be then be defined as:\\
\begin{equation}
d = 1 - \frac{H_M(X)-H(X)}{H_M(X)} = \frac{H(X)}{H_M(X)}
\label{eq:anondegree}
\end{equation}
Note that by dividing $H_M(X) - H(X)$ by $H_M(X)$, the resulting expression is
normalised. Therefore, it follows immediately that $0 \leq d \leq 1$.

\subsection{Selection criteria}

Taking into account the aforementioned anonymity degree expression, we can now
formally define a selection of Tor nodes criteria as follows.

\begin{definition} A selection of Tor nodes criteria is an algorithm executed by
a Tor client $s$ that, from a set of nodes $N$ with $n=|N|$ and a length of a
circuit $\delta$, selects ---using a given policy--- the entrance node $e$,
the exit node $x$, and the intermediary nodes $r_i$, $\forall i
\in\{1,...,\delta-2\}$, and outputs its corresponding circuit $C=\langle
s,e,r_{1},r_{2}, ... ,r_{\delta-2},x \rangle$ with a path $P=\{a_{1}, ...,
a_{\delta}\}$, where $a_{1}=(s, e)$, $a_{2}=(e, r_{1})$, $a_{3}=(r_{1},
r_{2})$,~...~, $a_{\delta-1}=(r_{\delta-3}$, $r_{\delta-2})$,
$a_{\delta}=(r_{\delta-2}, x)$. We use the notation convention $\psi(N,\delta)$
to denote the algorithm. The policy for the selection criteria of nodes can be
modelled as a discrete random variable $X$ that has a  $\mathrm{pmf}$ $f(x)$,
and we use the notation $\psi(N,\delta)\sim f(x)$.
\end{definition}

\section{Anonymity degree of three classical circuit construction strategies}
\label{sec:strategies}

In this section, we present three existing strategies for the construction of
Tor circuits, and elaborate on the conceptual evaluation of their anonymity
degree.

\subsection{Random selection of nodes}
\label{sec:random_selection}

The random selection of Tor nodes is an algorithm $\psi_{rnd}(N, \delta) \sim
f_{rnd}(x)$ with an associated discrete random variable $X_{rnd}$. The procedure
associated to this selection criteria is outlined in Algorithm \ref{alg:random}.
The selection policy of $\psi_{rnd}(N, \delta)$ is based on uniformly choosing
at random those nodes that will be part of the resulting circuit. Thus, the
$\mathrm{pmf}$ $f_{rnd}(x)$ is defined as follows:
\begin{equation*}
f_{rnd}(x_i)=p_i=\mathbb{P}(X_{rnd}=x_i)=\frac{1}{n}
\end{equation*}

Hence, the entropy of a Tor network whose clients use a random selection of
nodes is characterised by the following expression:
\begin{equation*}
H_{rnd}(X_{rnd}) = -\sum_{i=1}^{n}\frac{1}{n}\cdot log_{2}\Bigl (\frac{1}{n}
\Bigr ) =
\end{equation*}
\begin{equation*} -\frac{1}{n}\sum_{i=1}^{n}{(log_2(1)-log_2(n))} = log_2(n)
\end{equation*}

\medskip

\begin{algorithm}[b]
\caption{Random Selection of Nodes - $\psi_{rnd}(N,\delta)$}
\label{alg:random}
\begin{algorithmic}
\STATE \textbf{Input:} $s,N,\delta$\\
\STATE \textbf{Output:} $C=\langle s,e,r_{1},r_{2}, ... ,r_{\delta-2},x \rangle$,
$P=\{a1,...,a_\delta\}$\\
\medskip
	\STATE $M \leftarrow N$
	\STATE $C \leftarrow \{s\}$
	\FOR{$i \leftarrow 1$ \TO $\delta$}
		\STATE $j \leftarrow $random$(1, |M|)$
		\STATE $C \leftarrow C \cup \{m_j~|~m_j \in M\}$
		\STATE $P \leftarrow P \cup \{(c_i, c_{i+1})\}$
		\STATE $M \leftarrow M\setminus\{m_j~|~m_j \in M\}$
	\ENDFOR
\end{algorithmic}
\end{algorithm}

\begin{theorem}
\label{theo:rnd}
The selection of Tor nodes $\psi_{rnd}(N, \delta)\sim f_{rnd}(x)$ with an
associated discrete random variable $X_{rnd}$ gives the maximum degree of
anonymity among all the possible selection algorithms.
\end{theorem}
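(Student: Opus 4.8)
The plan is to reduce the statement to the classical fact that the Shannon entropy of a distribution over $n$ outcomes is maximised by the uniform distribution. Since every selection criteria $\psi(N,\delta) \sim f(x)$ induces a pmf $(p_1, \ldots, p_n)$ on the same set of $n$ nodes, the normalising quantity $H_M(X) = \log_2(n)$ is a constant common to all strategies. Consequently the anonymity degree $d = H(X)/H_M(X)$ is maximised over strategies exactly when the entropy $H(X)$ is maximised over admissible pmfs. The computation immediately preceding the theorem already shows that the random strategy yields $H_{rnd}(X_{rnd}) = \log_2(n)$, hence its anonymity degree equals $1$; it therefore remains to prove that no pmf can produce a larger entropy.

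The key step is the inequality $H(X) \le \log_2(n)$ for every pmf $(p_1,\ldots,p_n)$ with $p_i \ge 0$ and $\sum_{i=1}^n p_i = 1$. I would prove this using Jensen's inequality together with the concavity of the logarithm. Restricting the sum to the support $S = \{i : p_i > 0\}$ (under the standard convention $0 \cdot \log_2 0 = 0$), and viewing the probabilities as the weights of a convex combination of the values $\{1/p_i\}_{i \in S}$, concavity gives
\[
H(X) = \sum_{i \in S} p_i \log_2\Bigl(\frac{1}{p_i}\Bigr) \le \log_2\Bigl(\sum_{i \in S} p_i \cdot \frac{1}{p_i}\Bigr) = \log_2(|S|) \le \log_2(n).
\]
Dividing by $H_M(X) = \log_2(n) > 0$ yields $d \le 1$ for every strategy, with equality attained by $\psi_{rnd}$, which establishes the claim.

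The main obstacle is less the inequality itself than the characterisation of the equality case, which is what pins optimality to the random strategy specifically. Equality in Jensen's inequality forces the averaged quantities $1/p_i$ to be identical on the support, and the bound $\log_2(|S|) \le \log_2(n)$ forces $|S| = n$; together these give $p_i = 1/n$ for all $i$, so the uniform distribution underlying $\psi_{rnd}$ is the unique maximiser. As a cross-check I would note that the same bound follows from Gibbs' inequality: the non-negativity of the relative entropy of $f$ with respect to the uniform distribution $q_i = 1/n$, namely $\sum_i p_i \log_2(p_i / q_i) \ge 0$, rearranges directly into $H(X) \le \log_2(n)$, again with equality if and only if $f = q$.
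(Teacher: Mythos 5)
Your proposal is correct, and its skeleton coincides with the paper's: compute $H_{rnd}(X_{rnd}) = \log_2(n)$, substitute into Equation (\ref{eq:anondegree}) to obtain $d_{rnd}=1$, and observe that no strategy can exceed $1$. The difference lies in where that upper bound comes from. The paper's proof is a one-line substitution, $d_{rnd} = \log_2(n)/\log_2(n) = 1$, and it implicitly leans on the earlier remark in the ``Anonymity degree'' subsection that $0 \leq d \leq 1$ ``follows immediately'' from normalisation --- a claim that is in fact equivalent to the maximum-entropy inequality $H(X) \leq \log_2(n)$ and is never actually proved in the paper. You supply exactly that missing lemma, via Jensen's inequality restricted to the support (with the Gibbs-inequality argument as an independent cross-check), and you additionally characterise the equality case, showing that the uniform pmf underlying $\psi_{rnd}$ is the \emph{unique} maximiser. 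What your route buys is a self-contained justification of the phrase ``among all the possible selection algorithms,'' which the paper's proof only obtains by appeal to the asserted normalisation bound; the uniqueness statement is a bonus beyond what the theorem claims. What the paper's route buys is brevity, at the cost of resting on that unproven assertion.
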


\begin{proof}
The proof is direct by replacing $H_{rnd}(X_{rnd})$ in Equation
(\ref{eq:anondegree}):

\begin{equation*}
d_{rnd}=\frac{H_{rnd}(X_{rnd})}{H_M(X_{rnd})} =\frac{log_2(n)}{log_2(n)} = 1
\end{equation*}
\end{proof}
\medskip

\subsection{Geographical selection of nodes}
\label{sec:geoip_selection}

The geographical selection of Tor nodes is an algorithm $\psi_{geo}(N,
\delta) \sim f_{geo}(x)$ with an associated discrete random variable
$X_{geo}$. Its selection method is based on uniformly choosing the
nodes that belong to the same country of the client $s$ that executes
$\psi_{geo}(N,\delta)$. The aim of this strategy is to reduce the
latency of the communications using the Tor network, since the number
of hops between Tor nodes of the same country is normally smaller than
the number of hops between nodes that are located at different
countries. Algorithm \ref{alg:geoip} summarises the procedure
associated with this selection criteria.

\begin{algorithm}
\caption{Geographical Selection of Nodes - $\psi_{geo}(N,\delta)$}
\label{alg:geoip}
\begin{algorithmic}
\STATE \textbf{Input:} $s,N,\delta$, $K_c$\\
\STATE \textbf{Output:} $C=\langle s,e,r_{1},r_{2}, ... ,r_{\delta-2},x \rangle$,
$P=\{a1,...,a_\delta\}$\\
\medskip
	\STATE $M \leftarrow \{n_i\in N~|~g_c(n_i)=K_c\}$
	\STATE $C \leftarrow \{s\}$
	\FOR{$i \leftarrow 1$ \TO $\delta$}
		\STATE $j \leftarrow $random$(1, |M|)$
		\STATE $C \leftarrow C \cup \{m_j~|~m_j\in M\}$
		\STATE $P \leftarrow P \cup \{(c_i, c_{i+1})\}$
		\STATE $M \leftarrow M\setminus\{m_j~|~m_j \in M\}$
	\ENDFOR
\end{algorithmic}
\end{algorithm}

Formally, we define a function $g_c: \mathbb{R}\rightarrow \mathbb{N}$
that, given a certain node $x_i \in X_{geo}$, returns a number that
identifies its country. Thus, given the specific country number $K_c$
of the client node $s$, the $\mathrm{pmf}$ $f_{geo}(x)$ is
characterised by the following expression:
\begin{equation*}
f_{geo}(x_i)=p_i=\mathbb{P}(X=x_i)=\begin{cases}
\frac{1}{m}, &\text{if $g_c(x_i)=K_c$;}\\
0             &\text{otherwise.}
\end{cases}
\end{equation*}
where $m = |\{x_i \in X_{geo}~|~g_c(x_i)=K_c\}|$. Then, the entropy of a system
whose client nodes use a geographical selection for a certain country $K_c$ is:
\begin{equation*}
H_{geo}(X_{geo}) = - \displaystyle \sum_{i=1}^{m}\frac{1}{m}\cdot log_2
\Bigr(\frac{1}{m}\Bigl)=log_2(m)
\end{equation*}
Therefore, by replacing the previous expression in Equation
(\ref{eq:anondegree}), the anonymity degree is equal to:
\begin{equation*}
d_{geo} = \frac{log_2(m)}{log_2(n)}
\end{equation*}

\begin{theorem}
\label{theo:bw1}
The maximum anonymity degree of a Tor network whose clients use a geographical
selection of nodes is achieved iff all the nodes are in the same fixed country
$K_c$.
\end{theorem}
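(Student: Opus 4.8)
The plan is to work directly from the closed-form expression for the anonymity degree already derived, namely $d_{geo} = \frac{log_2(m)}{log_2(n)}$, where $m$ is the number of nodes located in country $K_c$ and $n = |N|$ is the total number of nodes. Since the maximal attainable anonymity degree is $1$ (as established for the random strategy in Theorem~\ref{theo:rnd}, and since $0 \le d \le 1$ in general), the statement reduces to characterising exactly when the geographical strategy reaches $d_{geo} = 1$.

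First I would record the set-theoretic fact that the nodes in country $K_c$ form a subset of $N$, so $m \le n$. Because $log_2$ is strictly increasing on the positive reals, this yields $log_2(m) \le log_2(n)$ and hence $d_{geo} = \frac{log_2(m)}{log_2(n)} \le 1$, confirming that $d_{geo}$ can never exceed the global maximum $1$ and that this maximum coincides with the value obtained by the random strategy.

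For the two directions of the biconditional, I would argue as follows. For the forward (sufficiency) direction, assume all nodes lie in $K_c$; then $m = n$, so $log_2(m) = log_2(n)$ and $d_{geo} = 1$, attaining the maximum. For the converse (necessity) direction, suppose the maximum is achieved, i.e.\ $d_{geo} = 1$; then $log_2(m) = log_2(n)$, and since $log_2$ is injective this forces $m = n$, which in turn means every node of $N$ belongs to $K_c$.

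The argument is essentially immediate once the formula for $d_{geo}$ is in hand, so there is no substantive obstacle; the only point requiring care is invoking the strict monotonicity (equivalently, injectivity) of the logarithm to pass from the equality of the logarithms back to $m = n$, together with the tacit assumption $n \ge 2$ so that $log_2(n) \neq 0$ and the defining ratio is well formed.
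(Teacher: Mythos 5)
Your proof is correct and follows essentially the same route as the paper's: both reduce the statement to the identity $d_{geo} = \frac{log_2(m)}{log_2(n)} = 1 \Leftrightarrow m = n$ and resolve each direction via the (strict) monotonicity of the logarithm. Your additional observations --- that $m \le n$ guarantees $d_{geo}\le 1$ so that ``maximum'' indeed means the value $1$, and that $n \ge 2$ is needed for the ratio to be well defined --- are minor rigour refinements the paper leaves implicit, not a different argument.
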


\begin{proof}
($\Rightarrow$) Given $d_{geo}=\frac{log_2(m)}{log_2(n)}$ for the country $K_c$
of a particular client $s$, we can impose the restriction of maximum degree of
anonymity:
\begin{equation*}
d_{geo} = \frac{log_2(m)}{log_2(n)} = 1
\end{equation*}
Hence,
\begin{align*}
log_2(m) &= log_2(n)\\
2^{log_2(m)} &= 2^{log_2(n)}\\
m &= n
\end{align*}
($\Leftarrow$) If $g_c(x_i)=K_c$, $\forall x_i \in X_{geo}$, then we have that
$m = |\{x_i \in X_{geo}~|~g_c(x_i)=K_c\}| = |N|$. Thus,

\begin{equation*}
d_{geo} = \frac{log_2(m)}{log_2(n)} = \frac{log_2(n)}{log_2(n)} = 1
\end{equation*}
\end{proof}

\begin{theorem}
\label{theo:bw2}
Given a Tor network whose clients use the algorithm $\psi_{geo}(N,
\delta) \sim f_{geo}(x)$ for a fixed country $K_c$, and with an
associated discrete random variable $X_{geo}$, the anonymity degree is
increased as $m$ approaches $n$ (i.e., $m \to n$), where $m = |\{x_i
\in X_{geo}~|~g_c(x_i)=K_c\}|$ and $n=|N|$.
\end{theorem}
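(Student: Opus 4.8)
The plan is to exhibit $d_{geo}$ as a strictly increasing function of $m$ while the total node set is held fixed, so that the assertion that the anonymity degree grows as $m \to n$ follows at once. Starting from the expression derived above, $d_{geo} = \frac{log_2(m)}{log_2(n)}$, I would first note that since the country $K_c$ and the set $N$ are fixed, the denominator $log_2(n)$ is a positive constant (for any non-trivial network $n \geq 2$, whence $log_2(n) > 0$). Hence $d_{geo}$ depends on $m$ only through the numerator $log_2(m)$.

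The core step is the strict monotonicity of the logarithm. For any two admissible counts $1 \leq m_1 < m_2 \leq n$ one has $log_2(m_1) < log_2(m_2)$, and dividing through by the positive constant $log_2(n)$ preserves the inequality, giving $d_{geo}(m_1) < d_{geo}(m_2)$. This shows that $d_{geo}$ is strictly increasing in $m$ over its admissible range. Equivalently, if one momentarily treats $m$ as a continuous parameter, the derivative $\frac{d}{dm}\, d_{geo} = \frac{1}{m \ln 2 \cdot log_2(n)}$ is strictly positive for all $m \geq 1$, confirming the same conclusion.

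Finally, I would connect the limiting value to Theorem~\ref{theo:bw1}: as $m$ increases toward $n$, the increasing quantity $d_{geo}(m)$ approaches its largest admissible value $d_{geo}(n) = \frac{log_2(n)}{log_2(n)} = 1$, which Theorem~\ref{theo:bw1} already identifies as the maximal attainable degree of anonymity. Thus the degree is monotonically raised precisely as $m \to n$.

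The only delicate point, more a matter of care than of genuine difficulty, is that $m$ is truly a discrete count of same-country nodes, so the derivative computation is merely a heuristic; the rigorous justification rests on the monotonicity of $log_2$ restricted to the integers. I would also be careful to set aside the degenerate case $n = 1$, for which $log_2(n) = 0$ leaves $d_{geo}$ undefined, but this case cannot arise in any meaningful Tor deployment where $n \geq 2$.
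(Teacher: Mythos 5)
Your proposal is correct and rests on the same core approach as the paper: both arguments reduce the claim to showing that $d_{geo} = \frac{\log_2(m)}{\log_2(n)}$ is an increasing function of $m$ with $n$ held fixed, so that the anonymity degree grows as $m \to n$. The only difference is in how monotonicity is justified. The paper's entire proof is the derivative computation $\frac{\partial}{\partial m}\bigl(\frac{\log_2(m)}{\log_2(n)}\bigr) = \frac{1}{m\,\ln(n)} > 0$ for $m>0$ and $n>1$, silently treating the discrete count $m$ as a continuous variable; you instead make the discrete inequality $\log_2(m_1) < \log_2(m_2)$ for integers $m_1 < m_2$ the rigorous core, and correctly demote the derivative (which you compute in the equivalent form $\frac{1}{m \ln 2 \cdot \log_2(n)}$) to a heuristic confirmation. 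Your version is thus slightly more careful on exactly the point the paper glosses over, and your closing observations---the limit value $d_{geo}(n)=1$ tying into Theorem~\ref{theo:bw1}, and the exclusion of the degenerate case $n=1$---mirror the paper's own remark that only $n>1$ is meaningful for a Tor network. Nothing is missing; the two arguments are interchangeable.
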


\begin{proof}
It suffices to prove that $d_{geo}$ is a monotonically increasing
function. That is, we must prove that $\frac{\partial}{\partial m}
(d_{geo})>0$, $\forall m>0$. Therefore, the proof is direct by
deriving, since the inequality:
\begin{equation*}
\frac{\partial}{\partial m}\Bigl(\frac{log_2(m)}{log_2(n)}\Bigr)=
\frac{1}{m\cdot log(n)} > 0
\end{equation*}
is true $\forall m>0$ and $\forall n>1$. We must notice that, from the point of
view of a Tor network, the restriction of the number of nodes $n>1$ makes sense,
since a network with $n \leq 1$ nodes becomes useless as a way to provide an
anonymous infrastructure.
\end{proof}

Figure \ref{fig:geo} depicts the influence of the uniformity of the
number of nodes per country on the anonymity degree. It shows, for a
fixed country, the anonymity degree of four Tor networks in function
of the nodes that are located in that country with respect to the
total number of nodes of the network. The considered Tor networks
have, respectively, 10, 50, 100 and 200 nodes. Their anonymity degrees
are denoted as $d_{10}$, $d_{50}$, $d_{100}$ and $d_{200}$. We can
observe that the anonymity degree increases as the total number of
nodes of the same country grows up (cf. Theorem \ref{theo:bw2}). This
fact can be extended until the maximum value of anonymity is achieved,
which occurs when the number of nodes of the particular country is the
same as the nodes that compose the entire network (cf. Theorem
\ref{theo:bw1}).

\begin{figure}[hptb]
    \centering
    \includegraphics[width=9.5cm]{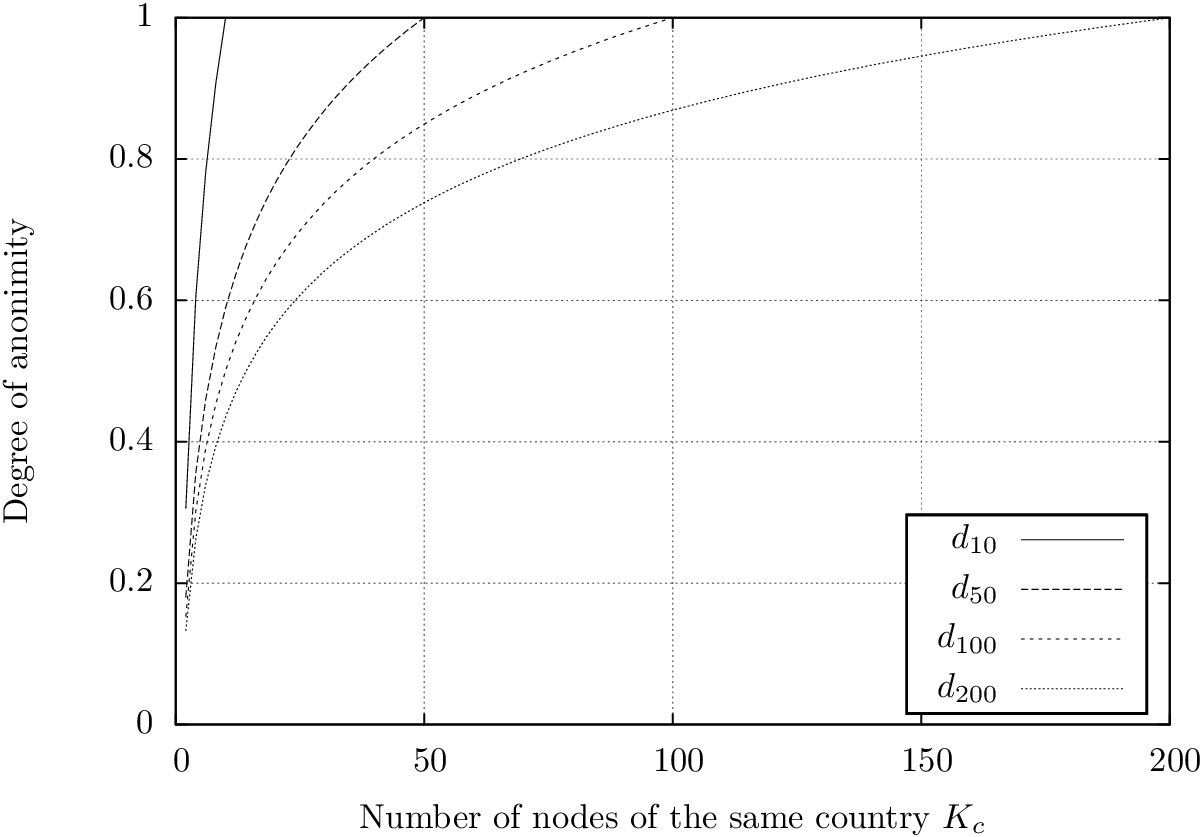}
    \caption{Influence of the uniformity of the number of nodes per country in
	     the anonymity degree for $\psi_{geo}(N,\delta)$}
    \label{fig:geo}
\end{figure}

\begin{theorem}
Given a client $s$ that uses as selection algorithm
$\psi_{geo}(N,\delta)$ in a Tor network with $n=|N|$, such that the
network nodes belong to a $p \ll n$ different countries, where $p$
is the number of different countries in Tor network, then the best
distribution of nodes that maximises the anonymity degree of the whole
system is achieved iff every country has $t=\lfloor \frac{n}{p}\rceil$
nodes.
\end{theorem}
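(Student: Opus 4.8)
The plan is to recast the claim as a constrained optimisation problem and solve it via Lagrange multipliers, exploiting the concavity of the logarithm. First I would make precise the notion of \emph{anonymity degree of the whole system}. Writing $m_i$ for the number of nodes in country $i$ ($i \in \{1,\dots,p\}$), each client located in country $i$ enjoys, by the analysis preceding Theorem~\ref{theo:bw1}, an anonymity degree $\frac{\log_2(m_i)}{\log_2(n)}$. Aggregating over the $p$ countries---equivalently, computing the entropy of the two-stage variable that first selects a country and then a node uniformly within it---the distribution-dependent part of the system entropy is $\frac{1}{p}\sum_{i=1}^{p}\log_2(m_i)$, subject to the conservation constraint $\sum_{i=1}^{p} m_i = n$. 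Thus the whole problem reduces to maximising $F(m_1,\dots,m_p)=\sum_{i=1}^{p}\log_2(m_i)$ on the simplex $\{\,m_i>0:\sum_i m_i=n\,\}$.

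Next I would set up the Lagrangian $\mathcal{L}=\sum_{i=1}^{p}\log_2(m_i)-\lambda\bigl(\sum_{i=1}^{p}m_i-n\bigr)$ and impose stationarity. Differentiating gives $\frac{\partial \mathcal{L}}{\partial m_i}=\frac{1}{m_i\ln 2}-\lambda=0$ for every $i$, so every $m_i$ equals the common value $\frac{1}{\lambda\ln 2}$; feeding this back into $\sum_i m_i=n$ forces $m_i=\frac{n}{p}$ for all $i$. To confirm that this stationary point is the global maximum rather than a saddle or minimum, I would appeal to concavity: the Hessian of $F$ is diagonal with entries $-\frac{1}{m_i^{2}\ln 2}<0$, so $F$ is strictly concave on the positive orthant and its unique interior critical point is its global maximiser. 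Equivalently, Jensen's inequality applied to the concave map $\log_2$ yields $\frac{1}{p}\sum_i\log_2(m_i)\le \log_2\bigl(\frac{1}{p}\sum_i m_i\bigr)=\log_2(n/p)$, with equality \emph{iff} all $m_i$ coincide, which establishes both directions of the ``iff'' at the continuous level.

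Finally I would address the integrality of the $m_i$. The continuous optimum sits at $n/p$, which need not be an integer; the admissible distribution closest to this flat profile assigns each country $t=\lfloor n/p\rceil$ nodes. Because $F$ is strictly concave, any integer distribution obtained by moving nodes away from this balanced profile strictly decreases $F$ (the ``only if'' direction), while substituting $m_i=t$ for all $i$ attains the bound (the ``if'' direction). The main obstacle I anticipate is precisely this rounding step: when $p\nmid n$ one cannot have every country hold exactly $\lfloor n/p\rceil$ nodes while still summing to $n$, so the clean statement ``every country has $t=\lfloor n/p\rceil$ nodes'' is exact only when $p\mid n$. Since the hypothesis $p\ll n$ makes the residual $n-p\lfloor n/p\rceil$ negligible, I would either assume $p\mid n$ for the exact identity or phrase optimality as ``the flattest integer distribution,'' observing that the discrete maximiser differs from the uniform one by at most one node per country.
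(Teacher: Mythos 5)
Your proof is correct, and it takes a genuinely different (and tighter) route than the paper's. The paper argues the forward direction by asserting that the maximum occurs when the per-country degrees sum to one, then ``exponentiates'' that sum termwise to obtain $t_1+\cdots+t_p=n$ --- a step that is algebraically invalid, since $\log_2(t_1)+\cdots+\log_2(t_p)=\log_2(n)$ actually yields the product $t_1t_2\cdots t_p=n$, not the sum --- and then simply asserts that maximising the whole-system degree ``implies having the same number of nodes in every subset,'' which is precisely the claim that needs proving. The backward direction in the paper is a contradiction argument about a single deviating country $S_q$. Your approach replaces all of this with the standard convex-optimisation argument: maximise $\sum_i\log_2(m_i)$ subject to $\sum_i m_i=n$, get the symmetric stationary point from the Lagrangian, and certify it as the unique global maximiser by strict concavity, or in one line by Jensen's inequality with its equality condition. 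This supplies the mathematical core the paper leaves as an assertion, proves both directions of the ``iff'' simultaneously through the equality case of Jensen, and never touches the faulty exponentiation. You also correctly flag the integrality defect that the paper silently ignores: when $p\nmid n$ it is impossible for every country to hold exactly $t=\lfloor n/p\rceil$ nodes while summing to $n$, so the theorem as stated is exact only when $p\mid n$, and otherwise must be read as ``the flattest integer partition, differing from uniform by at most one node per country.'' That caveat is a weakness of the statement itself, not of your argument.
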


\begin{proof}
($\Rightarrow$) Let $p$ be the number of different countries of a Tor
  network, we can consider a collection of subsets $S_1, S_2, ..., S_p
  \subseteq N$ such as $\bigcup_{i=1}^{p}S_i = N$ and
  $\bigcap_{i=1}^{p}S_i = \varnothing$. Let $t_i$ be the number of
  nodes associated to the subset $S_i$, $i \in \{1,...,p\}$. Then, the
  anonymity degree of the whole system is maximised when the sum of
  all the degrees of anonymity of every country equals 1:
\begin{align*}
\sum_{i=1}^{p} \frac{log_2(t_i)}{log_2(n)} &= 1\\
\frac{log_2(t_1)}{log_2(n)} + \frac{log_2(t_2)}{log_2(n)} + ... +
\frac{log_2(t_p)}{log_2(n)} &= 1\\
2^{log_2(t_1)} + 2^{log_2(t_2)} + ... + 2^{log_2(t_p)} &= 2^{log_2(n)}\\
t_1 + t_2 + ... + t_p &= n
\end{align*}
However, to maximise the anonymity degree of the whole system implies also to
have the same uncertainty inside every subset $S_i$, $i \in \{1,...,p\}$, or, in
other words, to have the same number of nodes in every subset. Hence, we have
$t_1 = t_2 = ... = t_p = t$ and this leads to:
\begin{align*}
t_1 + t_2 + ... + t_p &= n\\
\underbrace{t + t + ... + t}_{p~\textrm{times}} &= n\\
p \cdot t &= n\\
t &= \frac{n}{p}
\end{align*}
($\Leftarrow$) Given $t=\lfloor \frac{n}{p}\rceil$ be the number of nodes of a
certain subset $S_i$, $i \in \{1,...,p\}$, we have $\sum_{i=1}^{p} |S_i| = {p
\cdot t}=n$. The  $\mathrm{pmf}$ associated to $\psi_{geo}(N,\delta)$ is then
$f_{geo}(x)=\frac{1}{t}$ for each subset $S_i$, $i\in \{1,...,p\}$. Therefore,
the entropy of each subset (i.e., country) is:
\begin{equation*}
H_{geo}(X_{geo}) = -\sum_{i=1}^{t}\frac{1}{t}\cdot log_{2}
\Bigl(\frac{1}{t}\Bigr)= log_2(t)
\end{equation*}
Hence, for each subset $S_i$, $i \in \{1,...,p\}$, the anonymity degree can
be expressed as follows:
\begin{equation*}
d_{geo} = \frac{log_2(t)}{log_2(n)}
\end{equation*}
Suppose now, by contradiction, that there exists a unique $S_{q} \in
\{S_1, S_2, ..., S_p\}$ for a particular country $K_q$ such that $|S_q|\ne
t$, and its anonymity degree is expressed by
$d_{geo^*}=\frac{log_2(|S_q|)}{log_2(n)}$. Then, taking into account
that $d_{geo}$ and $d_{geo^*}$ are monotonically increasing functions
(cf. proof of Theorem \ref{theo:bw2}), we have two options:
\begin{itemize}
   \item If $|S_q|<t \rightarrow d_{geo^*} < d_{geo}$
   \item If $|S_q|>t \rightarrow d_{geo^*} > d_{geo}$
\end{itemize}
But this is not possible since:
\begin{align*}
   \sum_{i=1}^{p} |S_i| &= n\\
         (p-1)t + |S_q| &= n\\
                  |S_q| &= n - t(p-1)\\
                  |S_q| &= n - \frac{n}{p} (p-1)\\
                  |S_q| &= \frac{n}{p}
\end{align*}
which implies that $d_{geo^*} = d_{geo}$, contradicting the above two options.
\end{proof}

\subsection{Bandwidth selection of nodes}
\label{sec:bw_selection}

The bandwidth selection of nodes strategy is an algorithm
$\psi_{bw}(N, \delta) \sim f_{bw}(x)$ with an associated discrete
random variable $X_{bw}$ whose selection policy is based on choosing,
with high probability, the nodes with best network bandwidth. The
procedure associated to this selection criteria is outlined in
Algorithm \ref{alg:bw}. The aim of this strategy is to reduce the
latency of the communications through a Tor circuit, specially when
the communications imply a great rate of data exchanges. At the same
time, this mechanism provides a balanced anonymity degree, since the
selection of nodes is not fully deterministic from the adversary point
of view.

\begin{algorithm}
\caption{Bandwidth Selection of Nodes - $\psi_{bw}(N,\delta)$}
\label{alg:bw}
\begin{algorithmic}
\STATE \textbf{Input:} $s,N,\delta$\\
\STATE \textbf{Output:} $C=\langle s,e,r_{1},r_{2}, ... ,r_{\delta-2},x \rangle$,
$P=\{a1,...,a_\delta\}$\\
\medskip
	\emph{/* Compute a weighted well-ordered set */}
	\STATE $M \leftarrow \{n_i\in N~|~g_{bw}(n_i) \le g_{bw}(n_{i+1})\}$
	\STATE $T_{bw}  \leftarrow  \sum\limits^{n}_{i=1} g_{bw}(m_i)$, $\forall m_i \in
               M$
	\STATE $W \leftarrow  \{\}$
	\FOR{$i \leftarrow 1$ \TO $n$}
		\STATE $W \leftarrow W \cup \Bigl\{\Bigl(m_i, \sum \limits^{i}_{j=1}
		\frac{g_{bw}(m_j)}{T_{bw}}\Bigr)~|~\forall m_i,m_j \in M\Bigr\}$
	\ENDFOR
	\medskip

	\emph{/* Compute the nodes of the circuit $C$ */}
	\STATE $C \leftarrow \{s\}$
	\FOR{$i \leftarrow 1$ \TO $\delta$}
		\STATE $rnd \leftarrow $random$(0, 1)$
		\STATE Select a tuple $(m_j, bw_j) \in W$, where \\
~~~~~~~~$m_j \notin C$ \AND\\
~~~~~~~~$rnd \in [bw_j, bw_{j+1})$
		\STATE $C \leftarrow C \cup \{m_j\}$
		\STATE $P \leftarrow P \cup \{(c_i, c_{i+1})\}$
	\ENDFOR
\end{algorithmic}
\end{algorithm}

In this strategy, the entropy and the anonymity degree can be
described formally as follows. First, we define a bandwidth function
$g_{bw}: \mathbb{R} \rightarrow \mathbb{N}$ that, given a certain node
$x_i \in X_{bw}$, returns its associated bandwidth. Then, the
$\mathrm{pmf}$ $f_{bw}(x)$ is defined by the expression:
\begin{equation*}
f_{bw}(x_i)=p_i=\mathbb{P}(X_{bw}=x_i)=\frac{g_{bw}(x_i)}{T_{bw}}
\end{equation*}
where $T_{bw}={\sum\limits^{n}_{i=1} g_{bw}(x_i)}$ is the total bandwidth of the
Tor network. Hence, the entropy of a system whose clients use a bandwidth
selection of nodes strategy is:
\begin{equation*}
H_{bw}(X) = - \displaystyle \sum_{i=1}^{n}
\frac{g_{bw}(x_i)}{T_{bw}}\cdot log_2 \biggl(\frac{g_{bw}(x_i)}{T_{bw}}\biggr)
\end{equation*}
By replacing $H_{bw}(X)$ in Equation (\ref{eq:anondegree}), the anony\-mity
degree is, then, as follows:
\begin{equation*}
d_{bw} = - \displaystyle
\sum_{i=1}^{n}\frac{g_{bw}(x_i)}{T_{bw}\cdot log_2(n)}\cdot
log_2 \biggl(\frac{g_{bw}(x_i)}{T_{bw}}\biggr)
\end{equation*}

\begin{figure}[hptb]
    \centering
    \includegraphics[width=9.5cm]{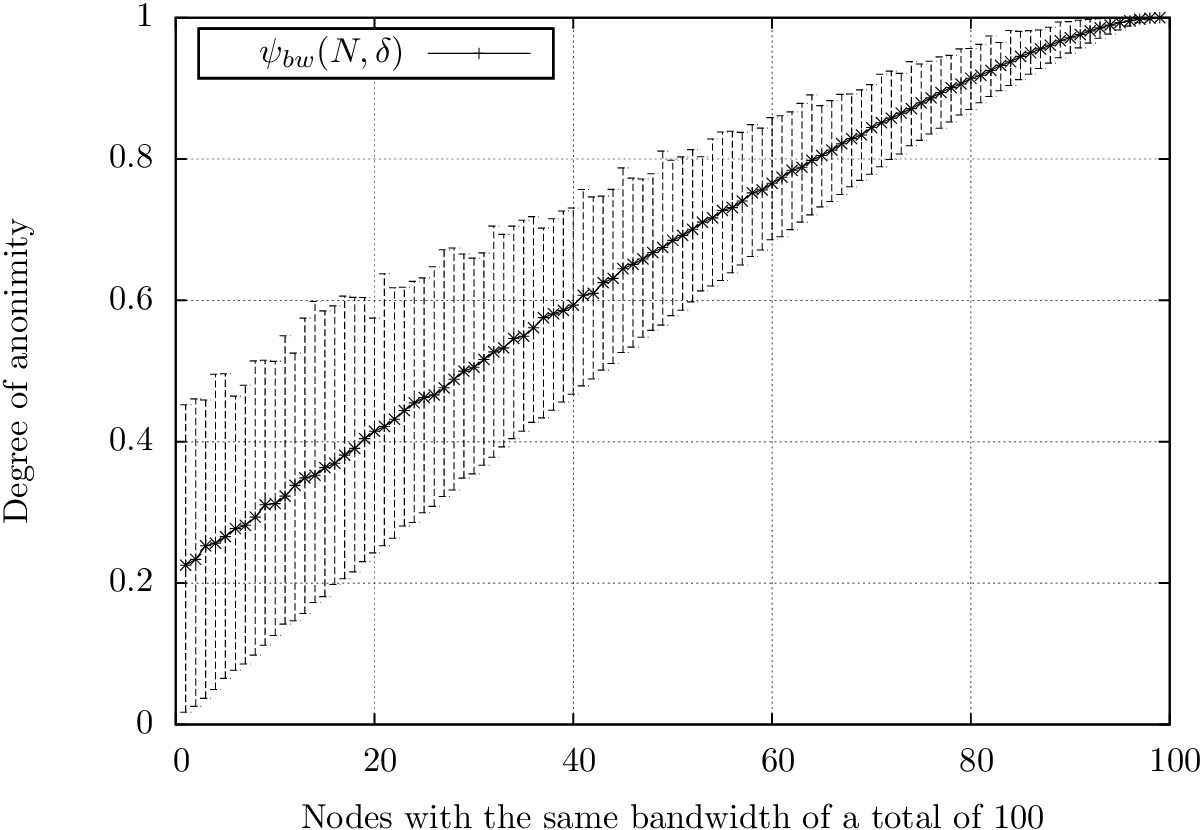}
    \caption{Influence of the uniformity of the bandwidth distribution in the
	     anonymity degree for $\psi_{bw}(N,\delta)$}
    \label{fig:bw}
\end{figure}

\begin{theorem}
Given a selection of Tor nodes $\psi_{bw}(N, \delta)\sim f_{bw}(x)$ with an
associated discrete random variable $X_{bw}$, the maximum anonymity degree is
achieved iff $g_{bw}(x_i)=K_{bw}$ $\forall x_i \in X_{bw}$, where $K_{bw}$ is a
constant.
\end{theorem}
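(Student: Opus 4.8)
The plan is to reduce the claim to the classical maximum-entropy property of the uniform distribution. Since the maximal entropy of the network is the constant $H_M(X_{bw})=log_2(n)$, the anonymity degree $d_{bw}=H_{bw}(X)/log_2(n)$ is maximised exactly when the Shannon entropy $H_{bw}(X)$ is maximised. Writing $p_i=g_{bw}(x_i)/T_{bw}$, the quantities $p_1,\ldots,p_n$ form a $\mathrm{pmf}$ on $n$ outcomes: they are non-negative, and because $T_{bw}=\sum_{i=1}^{n}g_{bw}(x_i)$ they sum to one. The problem therefore becomes the standard one of deciding which $\mathrm{pmf}$ maximises $H_{bw}(X)=-\sum_{i=1}^{n}p_i\,log_2(p_i)$, and then translating the answer back into a statement about the bandwidths $g_{bw}(x_i)$.

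For the reverse direction ($\Leftarrow$) I would argue directly. If $g_{bw}(x_i)=K_{bw}$ for every $i$, then $T_{bw}=nK_{bw}$ and hence $p_i=K_{bw}/(nK_{bw})=1/n$ for all $i$; substituting into the entropy gives $H_{bw}(X)=-\sum_{i=1}^{n}(1/n)\,log_2(1/n)=log_2(n)$, whence $d_{bw}=log_2(n)/log_2(n)=1$, the largest value permitted by the bound $0\le d\le 1$. Thus a constant bandwidth realises the maximum.

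For the forward direction ($\Rightarrow$) I would establish that the uniform $\mathrm{pmf}$ is the unique maximiser. Following the calculus style already used in the proof of Theorem~\ref{theo:bw2}, I would apply Lagrange multipliers to maximise $H_{bw}(X)$ subject to $\sum_{i=1}^{n}p_i=1$: setting the partial derivatives of $-\sum_i p_i\,log_2(p_i)-\lambda(\sum_i p_i-1)$ to zero yields $-log_2(p_i)-1/\ln(2)-\lambda=0$ for every $i$, so all the $p_i$ coincide, and the constraint forces $p_i=1/n$. Translating back, $p_i=1/n$ means $g_{bw}(x_i)=T_{bw}/n$ for every $i$, so all nodes share the common bandwidth $K_{bw}:=T_{bw}/n$, as required.

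The main obstacle is the equality characterisation itself: I must show that attaining the maximum forces every bandwidth to be equal, not merely that equal bandwidths attain it. First-order stationarity alone does not guarantee this, so the argument must appeal to the concavity (strict concavity of $-x\,log_2(x)$) that certifies the stationary point as the unique global maximum. A cleaner alternative that packages the bound and its sharp equality condition in a single step is Gibbs' inequality applied with the uniform $\mathrm{pmf}$ $q_i=1/n$, giving $-\sum_i p_i\,log_2(p_i)\le -\sum_i p_i\,log_2(q_i)=log_2(n)$, with equality \emph{if and only if} $p_i=q_i=1/n$ for all $i$; this equality clause is exactly what yields the "only if" part of the theorem.
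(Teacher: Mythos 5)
Your proposal is correct and follows essentially the same route as the paper's proof: maximising $d_{bw}$ is equivalent to maximising $H_{bw}(X)$, which holds iff the pmf $p_i = g_{bw}(x_i)/T_{bw}$ is uniform, which in turn holds iff every bandwidth equals the constant $T_{bw}/n$; your ($\Leftarrow$) direction is the paper's computation almost verbatim. The only difference is one of rigour: where the paper simply asserts that maximal entropy ``is only possible when'' $f_{bw}(x_i)=\frac{1}{n}$, you supply the justification for that uniqueness claim (Gibbs' inequality with the uniform reference distribution, or strict concavity of $-x\log_2 x$), which is precisely the nontrivial content of the forward direction --- so your write-up is, if anything, more complete than the paper's.
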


\begin{proof}
($\Rightarrow$) $H(X_{bw}) = H_{M}(X_{bw})$ would imply that the
  anonymity degree gets maximum. This is only possible when
  $f_{bw}(x_i)= \frac{g_{bw}(x_i)}{T_{bw}}=\frac{1}{n}$, $\forall x_i \in
  X_{bw}$. Therefore,
\begin{align*}
   \frac{g_{bw}(x_i)}{T_{bw}} &= \frac{1}{n}\\
                  g_{bw}(x_i) &= \frac{T_{bw}}{n}
\end{align*}
and since $T_{bw}$ and $n$ are constant values for a certain Tor network, we can
consider that $g_{bw}(x_i)$ is also a constant, $\forall x_i \in X_{bw}$. \medskip

\noindent ($\Leftarrow$) Given $f_{bw}(x_i)=\frac{g_{bw}(x_i)}{T_{bw}}$ it is
easy to see that if $g_{bw}(x_i)=K_{bw}$ $\forall x_i \in X_{bw}$ then
$T_{bw}={\sum^{n}_{i=1} g_{bw}(x_i)}=n \cdot K_{bw}$ and, as a consequence,
$f_{bw}(x_i)=\frac{K_{bw}}{n \cdot K_{bw}} = \frac{1}{n}$ $\forall x_i \in
X_{bw}$. Hence, by replacing $f_{bw}(x_i)=\frac{1}{n}$ in Equation
(\ref{eq:anondegree}), we get $d_{bw}=1$.
\end{proof}

Figure \ref{fig:bw} shows the relation between the uniformity of the
bandwidth of the nodes and the anonymity degree of the whole system.
It depicts the anonymity degree of a Tor system with 100 nodes,
measured under different restrictions. In particular, the bandwidth of
the nodes has been modified in a manner that a certain subset of nodes
has the same bandwidth, and the bandwidth of the remainder nodes has
been fixed at random. During all the measurements the total
bandwidth of the system $T_{bw}$ remains constant. As the size of the
subset is increased, and more nodes have the same bandwidth, the
uncertainty is higher from the point of view of the discrete random
variable associated to $\psi_{bw}(N, \delta)$. Therefore, the
anonymity degree is increased when the uniformity of the distribution
of the bandwidths grows.

\section{New strategy based on latency graphs}
\label{sec:new-strategy}

We present in this section a new selection criteria. The new strategy
relies on modelling the Tor network as an undirected graph $G(V, E)$,
where $V=N \cup \{s\}$ denotes the set composed by the Tor nodes
$N=\{v_1,...,v_n\}$ and the client node $v_{n+1}=s$, and where
$E=\{e_{12}, e_{13}, ..., e_{ij}\}$ denotes the set of the edges of
the graph. We use the notation $e_{ij}=(v_i,v_j)$ to refer to the edge
between two nodes $v_i$ and $v_j$. The set of edges $E$ represents the
potential connectivity between the nodes in $V$, according to some
partial knowledge of the network status which the strategy has. If an
edge $e_{ij}=(v_i, v_j)$ is in $E$, then the connectivity between
nodes $v_i$ and $v_j$ is potentially possible. The set of edges $E$ is
a dynamic set, i.e., the network connectivity (from a TCP/IP
standpoint) changes periodically in time, while the set of vertices
$V$ is a static set. Finally, and although the network connectivity
from node $v_i$ to $v_j$ is not necessarily the same as the
connectivity from $v_j$ to $v_i$, we decided to model the graph as
undirected for simplicity reasons. Our decision also obeys to the two
following facts: (i) in a TCP/IP network, the presence of nodes is
more persistent than the connectivity among them; and (ii) the
connectivity is usually the same from a bidirectional routing point of
view in TCP/IP networks.

Related to the edges of the graph $G(V,E)$, we define a function $c_t:
E \rightarrow \mathbb{R} \cup \{\infty \}$ such that, for every edge
$e_{ij} \in E$, the function returns the associated network latency
between nodes $v_i$ and $v_j$ at time $t$. If there is no connectivity
between nodes $v_i$ and $v_j$ at time $t$, then we say that the
connectivity is undefined, and function $c_t$ returns the infinity
value. Notice that function $c_t$ can be implemented in several ways.
Some previous work in the field include software tools to monitor the
network based on IP geolocation \cite{dong2012}, modelling of networks
as stochastic systems \cite{crisos2012}, and network tomography
\cite{Coates2002}. Regardless of the strategy used to implement $c_t$,
there is an important restriction from a security point of view:
leakage of sensitive information in the measurement process shall be
contained. This mandatory constraint must always be fulfilled.
Otherwise, an adversary can benefit from a monitoring process in
order to degrade the anonymity degree.

\begin{algorithm}[!ht]
\caption{Latency Computation Process - lat\_comp$(G(V, E), \Delta t, m)$}
\label{alg:latencies}
\begin{algorithmic}
\STATE \textbf{Input:} $G(V, E), \Delta t, m$\\
\medskip
	\STATE $t_0 \leftarrow t_q \leftarrow 0$
	\STATE $E \leftarrow \varnothing$
	\STATE $L(e_{ij}) \leftarrow (\infty, t_0)$
	\medskip

	\WHILE{\textit{TRUE}}
	    \STATE $t_q \leftarrow t_q+1$

	    \FOR{$i \leftarrow 1$ \TO $m$}
		\STATE $i,j \leftarrow $random$(1,|V|)$, $i \neq j$
		\STATE $l_q \leftarrow c_t(e_{ij})$\\
		\medskip

		\IF {$l_q = \infty$}
		    \STATE $E \leftarrow E \setminus \{e_{ij}\}$
		\ELSE
		    \STATE $E \leftarrow E \cup \{e_{ij}\}$
		    \STATE Given $L(e_{ij})=(l_p, t_p)$
		    \IF {$l_p \neq \infty$}
			\STATE $\alpha \leftarrow (t_p-t_0)/(t_q-t_0)$
			\STATE $l_q  \leftarrow \alpha \cdot l_p + (1-\alpha) \cdot
				l_q$

		    \ENDIF
		    \STATE $L(e_{ij}) \leftarrow (l_q, t_q)$
		\ENDIF
	    \ENDFOR\\
	    \STATE sleep($\Delta t$)
	    \medskip
	\ENDWHILE
\end{algorithmic}
\end{algorithm}

\begin{algorithm}[!ht]
\caption{K-paths Computation Process - kpaths$(G(V, E), \delta, k, x\_node,
cur\_path, paths\_list)$}
\label{alg:kpaths}
\begin{algorithmic}
\STATE \textbf{Input:} $G(V,E), \delta, k, x\_node, cur\_path, paths\_list$\\
\medskip
	\IF {len($paths\_list$) = $k$}
		\STATE return
	\ENDIF
	\IF {len($cur\_path$) $> \delta$}
		\STATE return
	\ENDIF
	\medskip

	\STATE $v_l \leftarrow $last\_vertex($cur\_path$)
        \STATE $new\_len \leftarrow $len($cur\_path$)+1
	\STATE $adjacency\_list \leftarrow $adjacent\_vertices($G(V, E)$, $v_l$)
	\STATE remove\_nodes($adjacency\_list$, $cur\_path$)
	\STATE random\_shuffle($adjacency\_list$)

	\medskip

	\FOR{$vertex$ \textbf{in} $adjacency\_list$}
	      \IF {$vertex = x\_node$ \AND $new\_len$ $< \delta$}
		    \STATE \textbf{continue}
	      \ENDIF

	      \IF {$vertex = x\_node$ \AND $new\_len=\delta$}
		    \STATE $new\_sol \leftarrow cur\_path+ \langle vertex \rangle$
		    \STATE $paths\_list \leftarrow paths\_list +\langle new\_sol \rangle$
		    \STATE \textbf{break}
	      \ENDIF
	      \STATE $cur\_path \leftarrow cur\_path+ \langle vertex \rangle$
	      \STATE kpaths$(G(V, E), \delta, k, x\_node, cur\_path,
              paths\_list)$
	\ENDFOR
\end{algorithmic}
\end{algorithm}

\begin{algorithm}[!ht]
\caption{Graph of Latencies Selection of Nodes - $\psi_{grp}(N,\delta)$}
\label{alg:graph}
\begin{algorithmic}
\STATE \textbf{Input:} $G(V,E), s, \delta, k, max\_iter, \Delta t$\\
\STATE \textbf{Output:} $C=\langle s,e,r_{1},r_{2}, ... ,r_{\delta-2},x \rangle$,
$P=\{a1,...,a_\delta\}$\\
\medskip

	\STATE $P \leftarrow \varnothing$
	\STATE $paths\_list \leftarrow \langle \rangle$
	\STATE $iter \leftarrow 0$
	\medskip

	\emph{/* Executed in background as a process */}
	\STATE lat\_comp$(G(V, E), \Delta t, m)$
	\medskip

	\REPEAT
		\STATE $cur\_path \leftarrow \langle s \rangle$
		\STATE $x\_node \leftarrow $random\_vertex($V \setminus \{s\}$)
		\STATE kpaths$(G(V, E), \delta, k, x\_node, cur\_path,
                       paths\_list)$
		\STATE $iter \leftarrow iter+1$
	\UNTIL {(\NOT empty($paths\_list$)) \OR ($iter=max\_iter$)}
	\medskip

	\IF {\NOT empty($paths\_list$)}
		\STATE $C \leftarrow $min\_weighted\_path($paths\_list$)
	\ELSE
		\STATE $C \leftarrow $random\_path($V$, $\delta$)
	\ENDIF
	\FOR{$i \leftarrow 1$ \TO $\delta - 1$}
		\STATE $P \leftarrow P \cup \{(c_i, c_{i+1})\}$
	\ENDFOR
\end{algorithmic}
\end{algorithm}


Given the aforementioned rationale, we propose now the construction of
our new selection strategy by means of two general processes. A first
process computes and maintains the set of edges of the graph and its
latencies. The second process establishes, according to the outcomes
provided by the first process, circuit nodes. Circuit nodes are chosen
from those identified within graph paths with minimum latency. These
two processes are summarised, respectively, in Algorithms
\ref{alg:latencies} and \ref{alg:graph}. A more detailed explanation
of the proposed strategy is given below.

The first process (cf. Algorithm \ref{alg:latencies}) is executed in
background and keeps a set of labels related to each edge. Every label
is defined by the expression $L(e_{ij})=(l, t)$, where $e_{ij}$
denotes its associated edge. The label contains a tuple $(l, t)$
composed by an estimated latency $l$ between the nodes of the edge
(i.e., $v_i$ and $v_j$), and a time instant $t$ which specifies when
the latency $l$ was computed. When the process is executed for the
first time, the set of edges and all the labels are initialised as $E
\leftarrow \varnothing$ and $L(e_{ij}) \leftarrow (\infty, 0)$.

At every fixed interval of time $\Delta t$, the process associated to
Algorithm \ref{alg:latencies} proceeds indefinitely as follows. A set
of $m$ edges associated to the complete graph $K_n$ with the same
vertices of $G(V,E)$ are chosen at random. The latency associated to
every edge is estimated by means of the aforementioned function $c_t$.
If the computed latency is undefined (i.e., function $c_t$ returns the
infinity value), then the edge is removed from the set $E$ (if it was
already in $E$) and the associated latency labels not updated.
Otherwise, the edge is added to the set $E$ (if it was not already in
$E$), and the value of its corresponding labels updated. In
particular, the latency member of the tuple is modified by using a
exponentially weighted moving average (EWMA) strategy \cite{hunter86},
and the time member is updated according to the current time instant
$t_q$. For instance, let us suppose that we are in the time instant
$t_q$ and we have chosen randomly the edge $e_{ij}$ with an associated
label $L(e_{ij})=(l_p, t_p)$. Let us also suppose that
$l_q=c_{t_q}(e_{ij})$ is the new latency estimated for such an edge.
Thus, its corresponding label is updated according to the following
expression:
\begin{equation*}
L(e_{ij}) \leftarrow
\begin{cases}
  \bigl (l_p, t_p \bigr ),  & \text{if $l_q = \infty$;}\\
  \bigl (l_q, t_q \bigr )   & \text{if $l_p = \infty$;}\\
  \bigl (\alpha \cdot l_p + (1 - \alpha) \cdot l_q, t_q\bigr ) &
\text{otherwise}
\end{cases}
\end{equation*}
The first case of the previous expression corresponds to a situation
of disconnection between the nodes of the edge $e_{ij}$, and that has
been detected by the function $c_{t_q}$. As a consequence,
$c_{t_q}(e_{ij})$ returns infinity. In this case, the previous
estimated latency $l_p$ is maintained in the tuple, and the edge
$e_{ij}$ is removed from $E$. The second case can be associated to the
first time the latency of the edge $e_{ij}$ is estimated using
$c_{t_q}$, since the previous latency was undefined and the infinity
value is the one used in the first instantiation of $L(e_{ij})$. Under
the two last cases of the previous expression, the edge $e_{ij}$ is
always added to the set $E$ if it still does not belong to the
aforementioned set. The third scenario corresponds to the EWMA in the
strict sense. In this case, the coefficient $\alpha \in (0, 1)$
represents a smoothing factor. The value $\alpha$ has an important
effects in the resulting estimated latency stored in $L(e_{ij})$.
Notice that those values of $\alpha$ that are close to zero give a
greater weight to the recent measurements of the latency through the
function $c_{t_q}$. Contrary to this, a value of $\alpha$ closer to
one gives a greater weight to the historical measurements, making the
resulting latency less responsive to recent changes.

For the definition of the $\alpha$ factor we must consider that the
previous update of the latency ---for a certain edge--- could have
been performed long time ago. This is possible since, for every
interval of time $\Delta t$ we choose randomly just only $m$ edges to
update their latencies. Indeed, the value of $l_p$ in the previous
example could have been computed at the time instant $t_p$, and where
$t_p \ll t_q$. Therefore, if we define $\alpha$ as a static value, the
weight for previous measurements will always be the same,
independently of when the measurement was taken. This is not an
acceptable approach since the older the previous measurement is, the
less weight should have in the resulting computed latency.

To overcome this semantic problem, the coefficient $\alpha$ must be
defined as a dynamic value that takes into account the precise moment
in which the previous latencies were estimated for every edge. In
other words, $\alpha$ should be inversely proportional to the size of
the time interval between the previous measurement and the current
one. In order to define $\alpha$ as a function of this time interval,
we must keep the time instant of the previous latency estimation for a
given edge. This can be accomplished by storing the time instants in
the tuple of every edge label. Hence, every time we select at random
$m$ edges to update their latencies, its associated time members of
its labels must be updated with the current time instant $t_q$. It is
important to remark that this update process must be done just only
when the function $c_t$ returns a value different from the infinity
one. Moreover, for a selected edge $e_{ij}$ in the time instant $t_q$,
its $\alpha$ value is defined as:
\begin{equation*}
\alpha = \frac{t_p - t_0}{t_q - t_0}
\end{equation*}
where $t_0$ is the first time instant when the execution of the
process started. A graphical interpretation of the previous expression
is depicted in Figure \ref{fig:alphacoef}. We can appreciate that
$\alpha \in (0,1)$ by associating the numerator and the denominator of
the expression with its interval representation in the figure. Thus,
we can directly deduce that $0 < (t_p-t_0) < (t_q-t_0)$ and,
consequently, $\alpha \in (0,1)$. In this figure, we can also see the
influence of the previous time instant $t_p$ on the resulting
$\alpha$. In particular, three cases are presented: a) $t_p \ll t_q$,
b) $t_p \approx \frac{t_q-t_0}{2}$, and c) $t_p \approx t_q$. For
these cases, we can observe how $\alpha$ tends to, respectively, $0$,
$0.5$ and $1$.

\begin{figure}[hptb]
  \centering
  \subfloat[$\alpha \rightarrow 0$]{
	\includegraphics[width=0.25\textwidth]{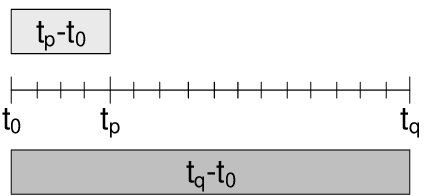}
	\label{fig:alpha0}}
  ~~\\
  \subfloat[$\alpha \rightarrow 0.5$]{
	\includegraphics[width=0.25\textwidth]{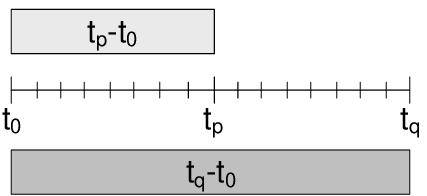}
	\label{fig:alpha05}}
  ~~\\
  \subfloat[$\alpha \rightarrow 1$]{
	\includegraphics[width=0.25\textwidth]{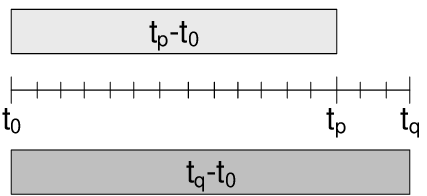}
        \label{fig:alpha1}}
  \caption{Graphical interpretation of the $\alpha$ coefficient}
  \label{fig:alphacoef}
\end{figure}

The second process (cf. Algorithm \ref{alg:graph}) is used for
selection of circuit nodes. It utilises the information maintained by
the process associated to Algorithm \ref{alg:latencies}. In
particular, the graph $G(V,E)$ and the labels $L(e_{ij})$ $\forall
e_{ij} \in E$ are shared between both processes. When a user wants to
construct a new circuit, this process is executed and it returns the
nodes of the circuit. For this purpose, an exit node $x$ is chosen at
random from the set of vertices $V \setminus \{s\}$. After that, the
process computes until $k$ random paths of length $\delta$ between the
nodes $s$ and $x$. With this aim, a recursive process, summarised
in Algorithm \ref{alg:kpaths}, is called. In the case that there is
not any path between the vertices $s$ and $x$, another exit node is
chosen and the procedure is executed again. This iteration must be
repeated until a) some paths of length $\delta$ between the pair of
nodes $s$ and $x$ are found, or b) until a certain number of
iterations are performed. In the first case, the path with the minimum
latency is selected as the solution among all the obtained paths. In
the second case, a completely random path of length $\delta$ is
returned. To avoid this situation, i.e., to avoid that our new
strategy behaves as a random selection of nodes strategy, the process
associated to Algorithm \ref{alg:latencies} must be started some time
before the effective establishment of circuits take place. This way,
the graph $G(V,E)$ increases the necessary level of connectivity among
its vertices. We refer to Section~\ref{sec:experiments} for more
practical details and discussions on this point.


\subsection{Discussion on the adversary model}

One may think that an adversary, as it was initially defined in
Section~\ref{sec:rationale}, can try to reconstruct the client graph
and guess the corresponding latency labels of our new strategy in
order to degrade its anonymity degree. However, even if we assume the
most extreme case, in which the adversary obtains a complementary
complete graph $K_n$ with the set of vertices $N$ and corresponding
latency labels, this does not affect the anonymity degree of our new
strategy. First of all, we recall that the graph of the client is a
dynamic random subgraph of $K_{n+1}$ that is evolving over time, with
a set of vertices $N \cup \{s\}$. The adversary graph would also be a
subgraph of $K_n$ with the set of vertices $N$, changing dynamically
as time goes by. Therefore, the set of vertices and edges of the
adversary and client graphs will never converge into same connectivity
model of the network. Moreover, the latencies between the client node
$s$ and any other potential entry node $e$ cannot be calculated by the
adversary. Otherwise, this would mean that the anonymity has already
been violated by the adversary. Indeed, the estimated latencies will
definitively differ between the client and the adversary graph, since
they are computed at different time frames and different source
networks. Finally, the adversary also ignores the exit nodes selected
by the client, as well as the $k$ parameter used by the client to
choose the paths.

\section{Analytical evaluation of the new strategy}
\label{sec:new-strategy-evaluation}

We provide in this section the analytical expression of the anonymity degree of
the new strategy. First, we extend the list of definitions provided in Section
\ref{sec:rationale}.

\subsection{Analytical graph of $\psi_{grp}(N, \delta)$}

In order to provide an analytical expression of the a\-no\-ny\-mi\-ty
degree it is important to notice that this must be always done from
the adversary standpoint. In this regard, the graph to be considered
for this purpose differs with respect to the one used to compute a
circuit. Note that the latencies associated to every edge which
contains the client node $s$ cannot be estimated by the adversary ---
specially if we consider that this particular node is unknown by the
adversary. Hence, an adversary aiming at violating the anonymity of
client node $s$ could try to estimate the user graph without node $s$
and its associated edges. This leads us to the following definition
(cf. Figure \ref{fig:latgraph} as a clarifying example):

\begin{definition}
Given a latency graph $G(V,E)$ associated to a selection of Tor nodes
$\psi_{grp}(N, \delta)$ strategy and the client node $s$, we define the
analytical graph as $G'(V',E')$ where $V'=V \setminus \{s\}$ and $E'=E \setminus
\{(s,v_i)\}$ $\forall v_i \in V$.
\end{definition}

\begin{figure}[tb]
  \centering
  \includegraphics[width=10cm]{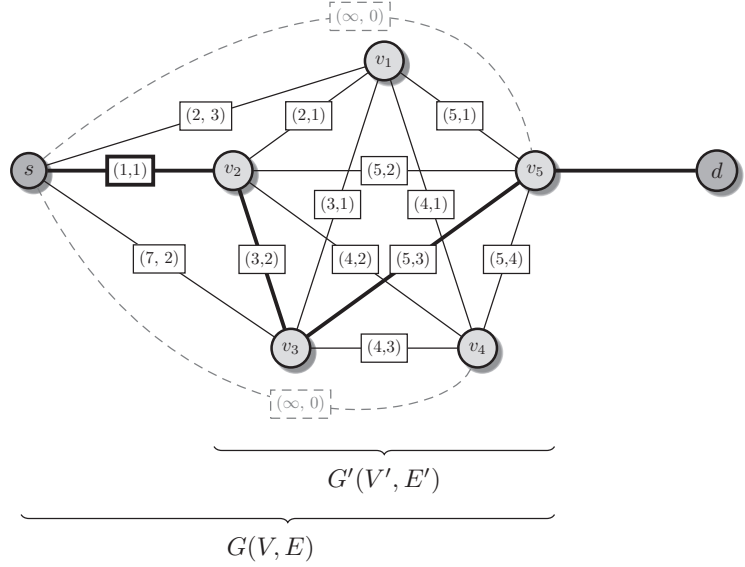}
  \caption{Example of a latency graph and its analytical graph with a
	   selected circuit $C=\langle s, v_2, v_3, v_5 \rangle$ of length
	   $\delta=3$}
  \label{fig:latgraph}
\end{figure}

\subsection{$\lambda$-betweenness and $\lambda$-betweenness probability}

For the purpose of computing the degree of anonymity of our new
strategy, a new metric inspired by the Freeman's \textit{betweenness
  centrality} measure \cite{Freeman77} is presented. This metric,
called $\lambda$-betweenness, is defined as a measurement of the
frequency which a node $v$ is traversed by all the possible paths of
length $\lambda$ in a graph. The formal definition is given below.

\begin{definition}
Consider an undirected graph $G(V, E)$. Let $KP_{st}$ denote the set of paths of
length $\lambda$ between a fixed source vertex $s \in V$ and a fixed target
vertex $t \in V$. Let $KP_{st}(v)$ be the subset of $KP_{st}$ consisting of
paths that pass through the vertex $v$. Then, we define the
$\lambda$-betweenness of the node $v \in V$ as follows:
\begin{equation*}
KP_{B}(v, \lambda)=\frac{\displaystyle \sum_{s,t \in V}\sigma_{st}(v, \lambda)}
{\displaystyle \sum_{s,t \in V} \sigma_{st}(\lambda)}
\end{equation*}
where $\sigma_{st}(\lambda) = |KP_{st}|$ and, $\sigma_{st}(v, \lambda)
= |KP_{st}(v)|$.
\end{definition}

As we can observe, the $\lambda$-betweenness provides the proportion between the
number of paths of length $\lambda$ which traverses a certain node $v$, and the
number of the total paths of length $\lambda$. However, since the degree of
anonymity needs a probability distribution, the following definition is
required.

\begin{definition}
\label{def:lambdaprob}
Consider an undirected graph $G(V, E)$. Let $KP_{B}(v, \lambda)$ be the
$\lambda$-betweenness of the node $v \in V$. Then, the $\lambda$-betweenness
probability of the node $v$ is defined as:
\begin{equation*}
LB(v, \lambda)=\frac{KP_{B}(v, \lambda)}{\displaystyle \sum_{w \in V}KP_{B}(w,
\lambda)}=\frac{\displaystyle \sum_{s,t \in V}\sigma_{st}(v,
\lambda)}{\displaystyle \sum_{w\in V} \sum_{s,t \in V}\sigma_{st}(w,
\lambda)}
\end{equation*}
\end{definition}
\noindent It follows immediately that $0 \leq LB(v, \lambda) \leq 1$,
$\forall v \in V$, since this expression is equivalent to the
normalised $\lambda$-betweenness.

\subsection{Entropy and anonymity degree}
\label{subsec:dgrp}
The graph of latencies selection of Tor nodes is defined formally as an
algorithm $\psi_{grp}(N, \delta) \sim f_{grp}(x)$ with an associated discrete
random variable $X_{grp}$ and an analytical graph $G'(V',E')$. The
$\mathrm{pmf}$ $f_{grp}(x)$ is given by means of the $\lambda$-between\-ness
probability expression:
\begin{equation*}
f_{grp}(x_i)=p_i=\mathbb{P}(X_{grp}=x_i)=\frac{\displaystyle \sum_{e,x \in
V'}\sigma_{ex}(v_i,\lambda)}{\displaystyle \sum_{w\in V'} \sum_{e,x \in
V'}\sigma_{ex}(w,\lambda)}
\end{equation*}
\noindent where $e$ and $x$ denotes every potential entry and exit
node respectively in a Tor circuit, and $\lambda=\delta - 1$. It is
worth noting that the value $\lambda=\delta - 1$ makes sense only if
we take into consideration that the client node $s$ and its edges are
removed in the analytical graph respect to the latency graph.

Hence, the entropy of a system whose clients use a graph of latencies selection
of nodes strategy is:
\begin{equation*}
H_{grp}(X) = - \displaystyle \sum_{i=1}^{n} LB(v_i, \lambda)\cdot log_2
\bigl(LB(v_i, \lambda)\bigr)
\end{equation*}
By replacing $H_{grp}(X)$ in Equation (\ref{eq:anondegree}), the degree of
anony\-mity is then:
\begin{equation*}
d_{grp} = - \displaystyle
\sum_{i=1}^{n}\frac{LB(v_i, \lambda)}{log_2(n)} \cdot log_2
\bigl(LB(v_i, \lambda)\bigr)
\end{equation*}

\begin{theorem}
\label{theo:graph1}
Given a selection of Tor nodes $\psi_{grp}(N, \delta)\sim f_{grp}(x)$ with an
associated discrete random variable $X_{grp}$ and an analytical graph
$G'(V',E')$ with $n=|V'|$ and $m=|E'|$, the a\-no\-ny\-mi\-ty degree is
increased as the density of the analytical graph grows.
\end{theorem}
\medskip

\begin{proof}
The density of a analytical graph $G'=(V',E')$ measures how many edges
are in the set $E'$ compared to the maximum possible number of edges
between vertices in the set $V'$. Formally speaking, the density is
given by the formula $\frac{2m}{n(n-1)}$. According to the previous
expression, and since the number of nodes of the analytical graph
remains constant, the only way to increase the density value is
through rising the value $m$; that is, by adding new edges to the
graph. Obviously, this implies that the more number of edges the
analytical graph has, the more its density value is augmented.

Moreover, if we increase the density of the analytical graph by adding
new edges, then the $\lambda$-betweenness probability of each vertex
will be affected. In particular, the denominator of the
$\lambda$-betweenness probability expression will change for all the
vertices in the same manner, whereas the numerator will be increased
for those vertices that lie on any new path of length $\lambda$ which
contains some of the added edges. However, this increase is not
arbitrary for a given vertex, since it has a maximum value determined
by the total amount of paths of length $\lambda$ which traverses such
vertex. Therefore, we can consider that each vertex has two states
while we are adding new edges. First, a transitory state where the
graph does not include all the paths of length $\lambda$ that traverse
such vertex. And second, a stationary state which implies that the
graph has all the paths of length $\lambda$ that traverses the given
vertex. Thus, if we add new edges at random, then the numerator of the
$\lambda$-betweenness probability of each vertex should be increased
uniformly. Consequently, the degree of anonymity grows when the
density of the graph is augmented.
\end{proof}

It is interesting to highlight that the numerator of the
$\lambda$-betweenness probability of a certain vertex will be
increased while it is in a transitory state, and until the vertex
achieves its stationary state. After that, such value cannot be
increased. It seems obvious that the degree of anonymity associated to
a particular analytical graph will be reached when all the vertices
are in a stationary states; or, in other words, when it is the
complete graph. Let us formalize this through the following theorem.

\begin{theorem}
\label{theo:graph2}
Given a selection of Tor nodes $\psi_{grp}(N, \delta)\sim f_{grp}(x)$ with an
associated discrete random variable $X_{grp}$ and an analytical graph
$G'(V',E')$ with $n=|V'|$, the maximum anonymity degree is achieved iff $G'(V',
E')$ is the complete graph $K_n$.
\end{theorem}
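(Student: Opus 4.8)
The plan is to reduce the statement to the maximum-entropy characterisation of the uniform distribution and then to combine it with the monotonicity result of Theorem~\ref{theo:graph1}. By definition $d_{grp} = H_{grp}(X)/\log_2(n)$, so the degree attains its maximal value $d_{grp}=1$ precisely when $H_{grp}(X)=\log_2(n)$. Since $X_{grp}$ ranges over $n$ outcomes with $\mathrm{pmf}$ $f_{grp}(x_i)=LB(v_i,\lambda)$, the standard entropy bound gives $H_{grp}(X)\le\log_2(n)$, with equality \emph{iff} the distribution is uniform, i.e.\ iff $LB(v_i,\lambda)=\tfrac{1}{n}$ for every $v_i\in V'$. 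This mirrors the maximum-entropy arguments already used for $\psi_{rnd}$ and $\psi_{bw}$, and it reduces the theorem to showing that the $\lambda$-betweenness probability is uniform exactly when $G'(V',E')=K_n$.

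For the ($\Leftarrow$) direction I would exploit the symmetry of the complete graph. The automorphism group of $K_n$ acts transitively on $V'$, so for any two vertices $v_i,v_j$ there is an automorphism sending $v_i$ to $v_j$; such a map induces a bijection between the $\lambda$-length paths through $v_i$ and those through $v_j$. Hence the numerator $\sum_{e,x\in V'}\sigma_{ex}(v_i,\lambda)$ in Definition~\ref{def:lambdaprob} takes the same value for every $i$, and after normalisation $LB(v_i,\lambda)=\tfrac{1}{n}$ for all $i$. By the reduction above this yields $d_{grp}=1$.

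The forward ($\Rightarrow$) direction is where I expect the real difficulty, and I would argue it by contradiction using density. Suppose $d_{grp}$ is maximal but $G'\neq K_n$; then at least one edge is absent, so the density $\frac{2m}{n(n-1)}$ is strictly below $1$. By Theorem~\ref{theo:graph1} the degree of anonymity increases as the density grows, so inserting a missing edge moves one or more vertices toward their stationary state and raises $d_{grp}$, contradicting maximality. Therefore every edge must be present and $G'=K_n$, which together with the ($\Leftarrow$) direction establishes the equivalence.

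The main obstacle is precisely this last step. The entropy reduction only forces the maximiser to have a \emph{uniform} $\lambda$-betweenness, and uniformity alone does not single out $K_n$: any vertex-transitive analytical graph (for instance a cycle $C_n$) also yields a uniform betweenness by the same automorphism argument used above. Closing the gap therefore cannot rely on the entropy characterisation in isolation; it must invoke the \emph{strict} monotonicity of $d_{grp}$ in the graph density from Theorem~\ref{theo:graph1}, together with the stationary-state observation that the numerators of the $\lambda$-betweenness probabilities can only keep growing until \emph{every} $\lambda$-path is present, which happens simultaneously for all vertices exactly at the complete graph. The delicate part is phrasing this so that it genuinely rules out \emph{all} proper subgraphs as maximisers, rather than merely exhibiting $K_n$ as one maximiser; I would treat the strictness of the increase upon adding an edge as the key lemma carrying the whole forward implication.
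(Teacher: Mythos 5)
Your reduction of ``maximum anonymity degree'' to uniformity of the $\lambda$-betweenness probability, and your ($\Leftarrow$) argument via vertex-transitivity of $K_n$, are both sound; the latter is in fact a cleaner, direct version of the symmetry argument that the paper phrases as a proof by contradiction. The problem is the forward direction, and the gap you flag there is real and fatal to your own proposed repair. You want to close it by invoking a strict form of Theorem~\ref{theo:graph1}: inserting a missing edge strictly increases $d_{grp}$. But the counterexample you yourself exhibit refutes exactly that lemma: if the analytical graph is a cycle $C_n$, vertex-transitivity makes $LB(v,\lambda)$ uniform, hence $d_{grp}(C_n)=1$; since $d_{grp}\leq 1$ always, inserting a chord cannot strictly increase the degree --- it breaks the symmetry and can only leave the degree unchanged or lower it. So ``strict increase upon adding an edge'' is false precisely in the situation your forward implication needs it. Moreover, Theorem~\ref{theo:graph1} as actually proved in the paper is a heuristic statement about adding edges at random and numerators growing ``uniformly''; it is not a per-edge strict monotonicity result you could cite even if it were true. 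Your proposal therefore establishes only the ($\Leftarrow$) direction.

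For comparison, the paper's own ($\Rightarrow$) proof takes a different route: it assumes $LB$ is uniform on an incomplete $G'$, deletes an edge, observes that the path count through some particular node $v_j$ changes, and declares a contradiction with uniformity. Note that this argument compares two different graphs ($G'$ and $G'$ minus an edge), so the fact that the perturbed graph is non-uniform says nothing about whether $G'$ itself was uniform; vertex-transitive incomplete graphs such as $C_n$ are not excluded by it either. So you have correctly located the weak point of the statement --- uniform $\lambda$-betweenness does not single out $K_n$ --- but neither your density-monotonicity route nor the paper's edge-deletion argument closes that gap, and no appeal to the entropy characterisation alone can.
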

\medskip

\begin{proof}
($\Rightarrow$) Let us suppose that $G'(V', E')$ is not the complete graph
$K_n$. The maximum anonymity degree will be achieved when $LB(v_i, \lambda)$ is
equiprobable for all $v_i \in V'$. That is:
\begin{equation*}
\frac{\displaystyle \sum_{e,x \in V}\sigma_{ex}(v_i,
\lambda)}{\displaystyle \sum_{w\in V} \sum_{e,x \in V}\sigma_{ex}(w,
\lambda)} = \frac{1}{n}~~~~~\forall v_i \in V'
\end{equation*}
where $\lambda=\delta - 1$, and where $e$ and $x$ represents every possible
\textit{entry} and \textit{exit} node of a circuit respectively. The previous
expression can be rewritten as follows:
\begin{equation*}
{\displaystyle \sum_{e,x \in V'} \sigma_{ex}(v_i, \lambda)} =
\frac{\displaystyle \sum_{e,x \in V'}\sigma_{ex}(v_1,
\lambda) + ... + \sum_{e,x \in
V'}\sigma_{ex}(v_n,\lambda)}{n}
\end{equation*}
Let us now suppose that the value $\sum_{e,x \in V'} \sigma_{ex}(v_i,
\lambda)$ is fixed for every node of the analytical graph in
accordance to the previous expression. Then, since $G'(V',E')$ is not
the complete graph $K_n$, we can eliminate an arbitrary edge such that
the number of paths of length $\lambda$ with entry node $e$ and exit
node $x$, and which traverses a given particular node $v_j \in V'$, is
reduced. Thus, the value of $\sum_{e,x \in V'}\sigma_{ex}(v_j,
\lambda)$ would be affected for that given node. However, this
contradicts the previous expression, since $\sum_{e,x \in V'}
\sigma_{ex}(v_i, \lambda)$ would take different values for distinct
nodes, and when such value must be the same for any node of the graph.
\medskip

\noindent ($\Leftarrow$) Let us suppose, by contradiction, that the maximum
a\-no\-ny\-mity degree is not achieved by the analytical graph $K_n$ associated to
$\psi_{grp}(N, \delta)$. This implies that given two different nodes $v_j$ and
$v_k$ of the graph $K_n$, they will not have the same probability of being
chosen by $\psi_{grp}(N, \delta)$; that is, $LB(v_j, \lambda) \neq LB(v_k,
\lambda)$. Then, since $LB(v, \lambda)$ is defined as follows:
\begin{equation*}
LB(v, \lambda)=\frac{\displaystyle \sum_{e,x \in V'}\sigma_{ex}(v,
\lambda)}{\displaystyle \sum_{w\in V'} \sum_{e,x \in V'}\sigma_{ex}(w,
\lambda)}
\end{equation*}
We can consider that the only factor which makes possible the previous
restriction $LB(v_j, \lambda) \neq LB(v_k, \lambda)$ is in the
numerator, because the value of the denominator remains equal for both
nodes in a fixed graph. Thus, if we want to satisfy the previous
restriction, we must change the value $\sum_{e,x \in V'}\sigma_{ex}(v,
\lambda)$ of either node $v_j$ or node $v_k$. However, this is only
possible if we eliminate a particular edge of the graph. This
contradicts the imposed premise that the analytical graph associated
to $\psi_{grp}(N, \delta)$ was the complete graph $K_n$.
\end{proof}

Theorems \ref{theo:graph1} and \ref{theo:graph2} are exemplified in
conjunction in Figure \ref{fig:graph}. We can observe how a density
increase of an analytical graph influences in the degree of anonymity,
achieving its maximum value when the graph is the complete one (i.e.,
it has a density equal to one).

\begin{figure}[hptb]
    \centering
    \includegraphics[width=10cm]{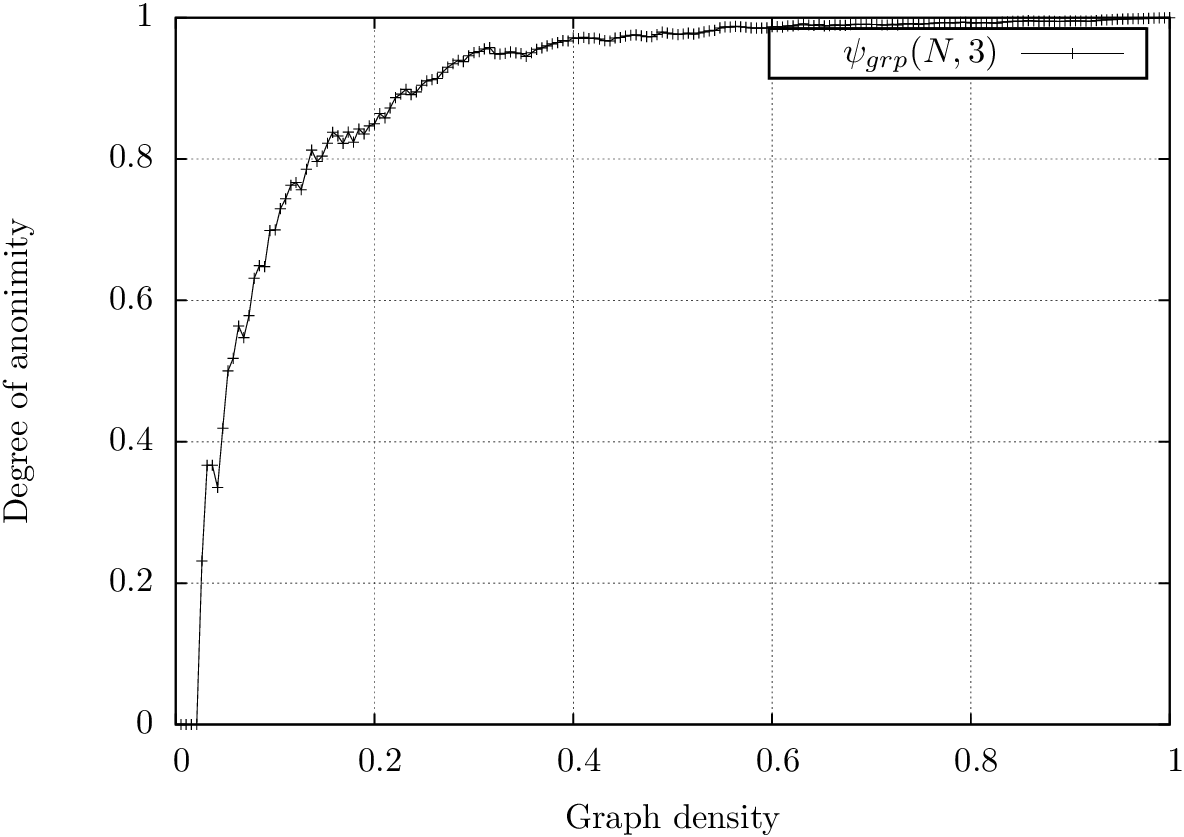}
    \caption{Influence of the density of the analytical graph in
the degree of anonymity with $|V'|=20$ and $\delta=3$}
    \label{fig:graph}
\end{figure}

\begin{theorem}
\label{theo:graph3}
Let $G(V,E)$ be a undirected graph with $n=|V|$ and let $\lambda$ be a fixed
length of a path, the value of $\sigma_{st}(\lambda)$ is maximised iff $G(V,E)$
is the complete graph $K_n$.
\end{theorem}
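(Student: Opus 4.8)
The plan is to prove the two implications separately, both anchored on an explicit count of the length-$\lambda$ simple $s$--$t$ paths available in the complete graph together with the observation that, for any graph $G$ on the same vertex set, its length-$\lambda$ paths form a subcollection of those available in $K_n$. First I would fix the source $s$ and the target $t$ and recall that a path of length $\lambda$ visits $\lambda+1$ vertices, hence contains exactly $\lambda-1$ internal vertices, all distinct and drawn from $V\setminus\{s,t\}$.

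For the ($\Leftarrow$) direction I would count directly in $K_n$. Since every pair of vertices is adjacent, any \emph{ordered} choice of $\lambda-1$ distinct internal vertices from the $n-2$ candidates yields a genuine path, so
\begin{equation*}
\sigma_{st}(\lambda)=(n-2)(n-3)\cdots(n-\lambda)=\frac{(n-2)!}{(n-1-\lambda)!}.
\end{equation*}
This is the global maximum: in an arbitrary graph $G$ every length-$\lambda$ $s$--$t$ path determines such an ordered sequence of internal vertices, but additionally requires all $\lambda$ of its consecutive edges to lie in $E$; mapping a path to its vertex sequence therefore injects the set $KP_{st}$ of $G$ into the corresponding set of $K_n$. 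Hence $\sigma^{G}_{st}(\lambda)\le\sigma^{K_n}_{st}(\lambda)$ for every $G$, and $K_n$ attains the bound.

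For the ($\Rightarrow$) direction I would argue by contraposition. Suppose $G\ne K_n$, so some edge $(v_i,v_j)\notin E$. By the injection above, every length-$\lambda$ $s$--$t$ path of $G$ is also a path of $K_n$ and, since $(v_i,v_j)\notin E$, it cannot use the absent edge; thus $KP_{st}$ of $G$ embeds into the set of length-$\lambda$ $s$--$t$ paths of $K_n$ that \emph{avoid} $(v_i,v_j)$. If I can exhibit at least one length-$\lambda$ $s$--$t$ path of $K_n$ that \emph{does} traverse $(v_i,v_j)$, then this latter set is a proper subset of all length-$\lambda$ paths of $K_n$, giving $\sigma^{G}_{st}(\lambda)<\sigma^{K_n}_{st}(\lambda)$. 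Contrapositively, maximality of $\sigma_{st}(\lambda)$ forces $G=K_n$.

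The hard part is the explicit construction of that witnessing path, because it splits into cases according to how the missing edge meets $\{s,t\}$: if $\{v_i,v_j\}\cap\{s,t\}=\varnothing$ one needs $\lambda\ge 3$ in order to place $v_i,v_j$ as consecutive internal vertices and pad the remaining $\lambda-3$ internal slots from $V\setminus\{s,t,v_i,v_j\}$; if exactly one of $v_i,v_j$ equals $s$ or $t$ one needs only $\lambda\ge 2$; and the degenerate case $\lambda=1$, where $\sigma_{st}(1)$ merely probes the single edge $(s,t)$ and is maximal for many non-complete graphs, must be excluded. Discharging these cases relies on the standing range $2\le\lambda\le n-1$ (no simple path has more than $n-1$ edges) and on checking that enough spare vertices remain, i.e. $n-1-\lambda\ge 0$, to complete each constructed path. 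This bookkeeping over edge positions, rather than any deep idea, is the genuine obstacle.
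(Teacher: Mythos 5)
Your ($\Leftarrow$) direction---the explicit count $(n-2)!/(n-1-\lambda)!$ together with the injection of $KP_{st}$ of an arbitrary $G$ into the corresponding set of $K_n$---is correct, and tighter than the paper's one-sentence version. The genuine gap is in ($\Rightarrow$), and it sits exactly where you declared the remaining work to be ``bookkeeping''. Your case analysis omits the case $\{v_i,v_j\}=\{s,t\}$, i.e.\ the missing edge being $(s,t)$ itself. In that case no witnessing path exists: in a simple $s$--$t$ path both endpoints have degree one, so the edge $(s,t)$ could only be simultaneously the first and the last edge of the path, which forces $\lambda=1$. Consequently $K_n$ minus the edge $(s,t)$ attains exactly the same $\sigma_{st}(\lambda)$ as $K_n$ for every $\lambda\ge 2$, and your contrapositive cannot be completed. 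A second instance of the same problem hides inside a case you do list: your standing range admits $\lambda=2$, yet the case $\{v_i,v_j\}\cap\{s,t\}=\varnothing$ needs $\lambda\ge 3$; and indeed for $\lambda=2$ every $s$--$t$ path uses only edges incident to $s$ or $t$, so deleting an edge disjoint from $\{s,t\}$ again leaves $\sigma_{st}(2)$ unchanged. These are not unfinished constructions; they are counterexamples showing that the statement, read for a fixed pair $(s,t)$, is false.

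The theorem is true only under the reading the paper actually relies on in Theorems \ref{theo:graph4} and \ref{theo:graph5}: the quantity being maximised is the aggregate $\sum_{s,t\in V}\sigma_{st}(\lambda)$ over all pairs (equivalently, maximality is demanded for all pairs simultaneously). Under that reading your witnessing-path idea does work, because the pair may be chosen relative to the missing edge: take $s'=v_i$, let $t'$ be any third vertex, and thread $v_i \to v_j \to \cdots \to t'$, which exists for every $2\le\lambda\le n-1$, so the aggregate inequality is strict. The paper's own proof of ($\Rightarrow$) buries this issue by postulating an edge ``that belongs to a path of length $\lambda$ between the nodes $s$ and $t$'' (and, incidentally, argues by deleting an edge of the non-complete $G$ and noting the count drops, which by itself does not contradict maximality of $G$ over all graphs); your more careful version makes the difficulty visible but does not resolve it. State the aggregate interpretation explicitly, redo the case analysis with the freedom to choose $(s,t)$, and the proof closes.
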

\medskip

\begin{proof}
($\Rightarrow$) Let us suppose, by contradiction, that $G(V,E)$ is not
  the complete graph $K_n$. Then, we can choose an arbitrary edge
  $e_{ij} \in E$ that belongs to a path of length $\lambda$ between
  the nodes $s$ and $t$. Then, we can remove $e_{ij}$ from $E$ since
  the graph is not complete. As a consequence, the value $KP_{st}$
  will be reduced. However, this contradicts the fact that the value
  $\sigma_{st}(\lambda)$ must be maximum since $\sigma_{st}(\lambda) =
  |KP_{st}|$.\\

\noindent ($\Leftarrow$) The proof is direct, since the complete graph $K_n$
contains all the possible edges between its nodes, and thus $KP_{st}$ consists
of all the possible paths of length $\lambda$ between the nodes $s$ and $t$.
\end{proof}
\medskip

\begin{theorem}
\label{theo:graph4}
Let $K_n$ be a complete graph, the total number of paths of length $\lambda$
between any pair of vertices $s$ and $t$ is given by the expression:

\begin{align*}
\sum_{s,t\in V} \sigma_{st}(\lambda) &= ((n-1)((n-1)^{\lambda} -
(-1)^{\lambda}))
\end{align*}
\end{theorem}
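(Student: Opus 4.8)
The plan is to recognise that, in the complete graph $K_n$, the quantity $\sigma_{st}(\lambda)=|KP_{st}|$ is most naturally counted by \emph{walks} of length $\lambda$ from $s$ to $t$, i.e. by the entries of the $\lambda$-th power of the adjacency matrix. Writing $A=J-I$ for the adjacency matrix of $K_n$ (all off-diagonal entries $1$, diagonal $0$), the number of length-$\lambda$ paths from $s$ to $t$ is $(A^{\lambda})_{st}$, so the desired sum over all ordered pairs with $s\neq t$ equals the sum of all entries of $A^{\lambda}$ minus the sum of its diagonal entries: $\sum_{s,t\in V}\sigma_{st}(\lambda)=\mathbf{1}^{\top}A^{\lambda}\mathbf{1}-\operatorname{tr}(A^{\lambda})$, where $\mathbf{1}$ is the all-ones vector.

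The two terms can then be evaluated from the spectrum of $A$. Since $J\mathbf{1}=n\mathbf{1}$, the vector $\mathbf{1}$ is an eigenvector of $A$ with eigenvalue $n-1$, while the orthogonal complement of $\mathbf{1}$ is the $(-1)$-eigenspace of $A$, of dimension $n-1$ (on $\mathbf{1}^{\perp}$ one has $J=0$, hence $A=-I$). Thus $A^{\lambda}\mathbf{1}=(n-1)^{\lambda}\mathbf{1}$, giving $\mathbf{1}^{\top}A^{\lambda}\mathbf{1}=n\,(n-1)^{\lambda}$, while $\operatorname{tr}(A^{\lambda})$ is the sum of the $\lambda$-th powers of the eigenvalues, namely $(n-1)^{\lambda}+(n-1)(-1)^{\lambda}$. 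Subtracting yields $n(n-1)^{\lambda}-(n-1)^{\lambda}-(n-1)(-1)^{\lambda}=(n-1)\big((n-1)^{\lambda}-(-1)^{\lambda}\big)$, which is exactly the claimed expression.

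I expect the main obstacle to be conceptual rather than computational: one must read $\sigma_{st}(\lambda)$ as a walk count (equivalently, a power of the adjacency matrix) and be careful to remove the degenerate pairs $s=t$. It is precisely the subtraction of the closed-walk count $\operatorname{tr}(A^{\lambda})$ that produces the $-(-1)^{\lambda}$ correction term; omitting it would give the wrong answer $n(n-1)^{\lambda}$. A quick sanity check on small $\lambda$ confirms the reading: for $\lambda=1$ the formula gives $n(n-1)$, the number of ordered edges, and for $\lambda=2$ it gives $n(n-1)(n-2)$, matching the $(n-2)$ choices of a single intermediate vertex for each of the $n(n-1)$ ordered endpoint pairs.

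As an elementary alternative that avoids linear algebra, I would exploit the vertex-transitivity of $K_n$ to argue that $\sigma_{st}(\lambda)=:W_\lambda$ is the same for every ordered pair $s\neq t$, and let $C_\lambda$ denote the number of closed length-$\lambda$ walks at a fixed vertex. Counting all $(n-1)^{\lambda}$ walks of length $\lambda$ from a fixed start according to their endpoint gives $C_\lambda+(n-1)W_\lambda=(n-1)^{\lambda}$, and requiring a return to the start at the last step gives $C_\lambda=(n-1)W_{\lambda-1}$; combining these yields the recurrence $W_\lambda+W_{\lambda-1}=(n-1)^{\lambda-1}$ with $W_1=1$, whose solution is $W_\lambda=\frac{(n-1)^{\lambda}-(-1)^{\lambda}}{n}$. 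Multiplying by the $n(n-1)$ ordered pairs recovers the same closed form.
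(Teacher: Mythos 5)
Your proof is correct, and both of your arguments are genuinely different from (and more self-contained than) the paper's. The paper also starts from the adjacency matrix $A$ of $K_n$ and the fact that $(A^{\lambda})_{ij}$ counts walks, but then proceeds by setting up entry-wise recurrences $t^{\lambda}=(n-2)\,t^{\lambda-1}+d^{\lambda-1}$ and $d^{\lambda}=(n-1)\,t^{\lambda-1}$ for the off-diagonal and diagonal entries, and delegates the ``cumbersome, but elementary'' unfolding of these recurrences into the closed forms $t^{\lambda}=\frac{(n-1)^{\lambda}-(-1)^{\lambda}}{n}$ and $d^{\lambda}=\frac{(n-1)^{\lambda}+(n-1)(-1)^{\lambda}}{n}$ to external references (Barry, and Stanley's course notes), before multiplying by $n(n-1)$. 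Your spectral argument replaces all of that with the observation that $A=J-I$ has eigenvalue $n-1$ on $\mathbf{1}$ and $-1$ on $\mathbf{1}^{\perp}$, so that $\mathbf{1}^{\top}A^{\lambda}\mathbf{1}-\operatorname{tr}(A^{\lambda})$ evaluates in two lines --- no recurrence, no citation, and the $-(-1)^{\lambda}$ correction appears transparently as the subtracted closed-walk count. Your elementary alternative is essentially a streamlined version of the paper's route: your relations $C_{\lambda}+(n-1)W_{\lambda}=(n-1)^{\lambda}$ and $C_{\lambda}=(n-1)W_{\lambda-1}$ are equivalent to the paper's pair, but collapsing them to the single recurrence $W_{\lambda}+W_{\lambda-1}=(n-1)^{\lambda-1}$ lets you solve it explicitly rather than outsourcing that step. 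You were also right to flag the terminological trap: the theorem says ``paths'' but the quantity being counted is walks (the paper's own appendix title concedes this), and your reading matches the paper's intent.
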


\begin{proof}
The proof is given in Appendix~\ref{sec:appendix2}.
\end{proof}

\begin{theorem}
\label{theo:graph5}
Given a selection of Tor nodes $\psi_{grp}(N, \delta)\sim f_{grp}(x)$ with an
associated discrete random variable $X_{grp}$ and an analytical graph
$G'(V',E')$, the maximum anony\-mity degree is achieved iff
\begin{align*}
\sum_{e,x\in V'} \sigma_{ex}(\lambda) & =
((n-1)((n-1)^{\lambda} - (-1)^{\lambda}))
\end{align*}
\end{theorem}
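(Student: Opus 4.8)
The plan is to establish the equivalence by chaining the three preceding theorems, so that no fresh combinatorics is needed. The right-hand side $((n-1)((n-1)^{\lambda} - (-1)^{\lambda}))$ is exactly the path count computed for the complete graph $K_n$ in Theorem~\ref{theo:graph4}; the whole argument therefore reduces to showing that attaining this value is equivalent to $G'(V',E')$ being complete, which in turn is equivalent to maximal anonymity degree.

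For the forward direction ($\Rightarrow$) I would start from the assumption that the maximum anonymity degree is achieved. By Theorem~\ref{theo:graph2} this forces the analytical graph $G'(V',E')$ to be the complete graph $K_n$. Applying Theorem~\ref{theo:graph4} to $K_n$, with the roles of source and target played by the entry node $e$ and the exit node $x$ and with $\lambda=\delta-1$, immediately yields $\sum_{e,x\in V'}\sigma_{ex}(\lambda)=((n-1)((n-1)^{\lambda}-(-1)^{\lambda}))$, which is the claimed identity.

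For the converse ($\Leftarrow$) I would argue that the stated value is precisely the maximum attainable value of $\sum_{e,x\in V'}\sigma_{ex}(\lambda)$. By Theorem~\ref{theo:graph3} each term $\sigma_{ex}(\lambda)$, and hence the sum, is maximised exactly when $G'(V',E')$ is the complete graph; since adding an edge can only increase, and never decrease, the number of paths of length $\lambda$ between any pair, the sum attains its maximum if and only if no edge is missing, i.e. $G'=K_n$. Theorem~\ref{theo:graph4} identifies this maximum with the right-hand side, so the hypothesis $\sum_{e,x\in V'}\sigma_{ex}(\lambda)=((n-1)((n-1)^{\lambda}-(-1)^{\lambda}))$ forces $G'=K_n$, and Theorem~\ref{theo:graph2} then delivers the maximum anonymity degree.

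The step I expect to be the main obstacle is the uniqueness buried in the converse: I must ensure that the target value characterises the complete graph and nothing else, i.e. that no proper subgraph of $K_n$ can accidentally attain the same total path count. This rests on the strict monotonicity behind Theorem~\ref{theo:graph3}, namely that removing any edge from $K_n$ strictly reduces $\sigma_{ex}(\lambda)$ for at least one pair $(e,x)$ whose length-$\lambda$ paths actually use that edge. Once this strictness is established, the remainder is a direct substitution through Theorem~\ref{theo:graph4} and an appeal to Theorem~\ref{theo:graph2}.
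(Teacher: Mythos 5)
Your proposal is correct and follows essentially the same route as the paper: the paper's own proof is simply the statement that the result follows directly by combining Theorems~\ref{theo:graph2}, \ref{theo:graph3} and \ref{theo:graph4}, which is exactly the chain you construct (with the added merit that you make explicit the strictness argument needed in the converse, a detail the paper leaves implicit).
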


\begin{proof}
The proof is direct by applying Theorems \ref{theo:graph2},
\ref{theo:graph3} and \ref{theo:graph4}.
\end{proof}

\section{Experimental results}
\label{sec:experiments}

We present in this section a practical implementation and evaluation
of the series of strategies previously exposed. Each implementation
has undergone several tests, in order to evaluate latency penalties
during Web transmissions. Additionally, the degree of anonymity of
every experimental test is also estimated, for the purpose of drawing
a comparison among them.

\begin{table}[!b]
   \centering
   \begin{tabular}{|r|c|r||r|}
      \hline
      \multicolumn{3}{|c||}{\textbf{Real Tor network}} &
                            \textbf{PlanetLab}\\
      \hline
      $\#$ Nodes & Country & \% & $\#$ Nodes\\
      \hline
      \hline

      815 & US & 26.54 & 27 \\
      533 & DE & 17.36 & 17 \\
      187 & RU &  6.09 &  6 \\
      181 & FR &  5.89 &  6 \\
      171 & NL &  5.56 &  6 \\
      146 & GB &  4.75 &  5 \\
      132 & SE &  4.30 &  4 \\
       80 & CA &  2.61 &  3 \\
       56 & AT &  1.82 &  2 \\
       43 & AU &  1.40 &  1 \\
       40 & IT &  1.30 &  1 \\
       40 & UA &  1.30 &  1 \\
       39 & CZ &  1.27 &  1 \\
       38 & CH &  1.24 &  1 \\
       34 & FI &  1.11 &  1 \\
       34 & LU &  1.11 &  1 \\
       33 & PL &  1.08 &  1 \\
       32 & JP &  1.04 &  1 \\
       437 & Others ($<$1\%)& 14.23 & 15 \\
       \hline
       \hline
       3071 & -- &  100 & 100 \\
       \hline
   \end{tabular}
   \caption{Selected PlanetLab nodes per country according to the real
    Tor network distribution}
    \label{table:plabs_dist}
\end{table}

\subsection{Node distribution and configuration in PlanetLab}

In order to measure the performance of the strategies presented in our
work, some practical experiments have been conducted. In particular,
we deployed a private network of Tor nodes over the PlanetLab research
network \cite{Chun2003,Peterson2006}. Our deployed Tor network is
composed of 100 nodes following a representative distribution based on
the real (public) Tor network. We distributed the nodes of the private
Tor network following the public network distribution in terms of
countries and bandwidths. Table \ref{table:plabs_dist} summarises the
distribution values per country. The estimated bandwidths of the nodes
is retrieved through the directory servers of the real Tor
network \cite{tordirspecv3}. Then, we categorised the nodes according
to their bandwidths by means of the \textit{k-means clustering}
methodology \cite{macqueen67,anderberg73}. A value of $k=100$ is used
as the number of clusters (i.e., number of selected nodes in
PlanetLab). When the algorithm converges, a cluster is assigned
randomly to each node of the private Tor network. Subsequently, the
bandwidth of each node is configured with the value of its associated
centroid (i.e. the mean of the cluster). For such a purpose, the
directive \texttt{BandwidthRate} is used in the configuration file of
every node. Let us note that the country and bandwidth values are
considered as independent in the final node distribution
configuration. Indeed, there is no need to correlate both variables,
since the bandwidth of every node can be configured by its
corresponding administrator, while this fact does not depend on the
country which the node belongs to.

\begin{figure}[hptb]
    \centering
    \includegraphics[width=9.5cm]{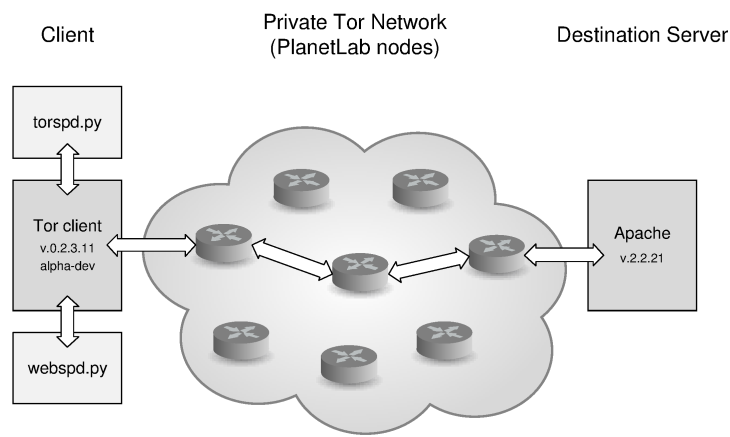}
    \caption{Conceptual representation of our testbed environment}
    \label{fig:tor-arch}
\end{figure}

\subsection{Testbed environment}
Every node of our Planetlab private network runs the Tor software,
version 0.2.3.11-alpha-dev. Additionally, four nodes inside the
network are configured as directory servers. These four nodes are in
charge of managing the global operation of the Tor network and
providing the information related to the network nodes.

\begin{figure}[hptb]
   \centering
   \subfloat[Web size of 50KB]{
      \includegraphics[width=8.25cm]{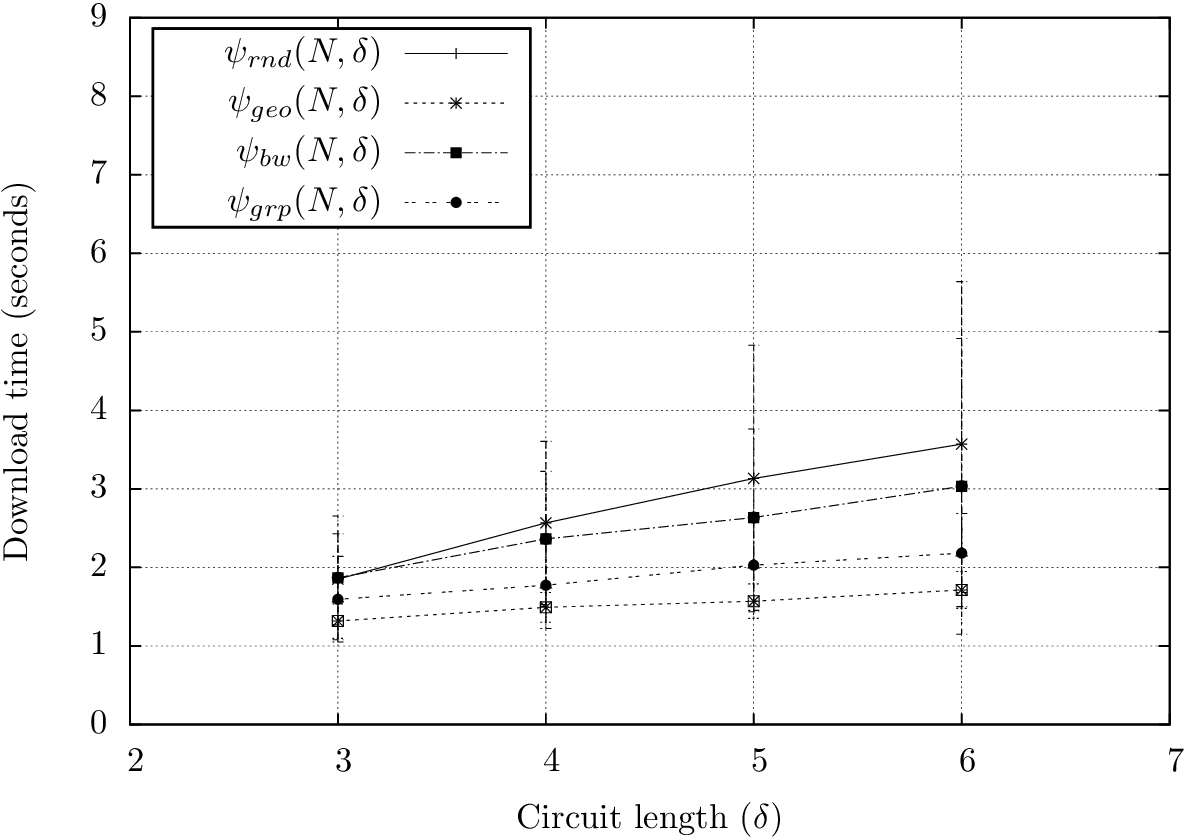}
      \label{fig:res-50}}
   ~~\\
   \subfloat[Web size of 150KB]{
      \includegraphics[width=8.25cm]{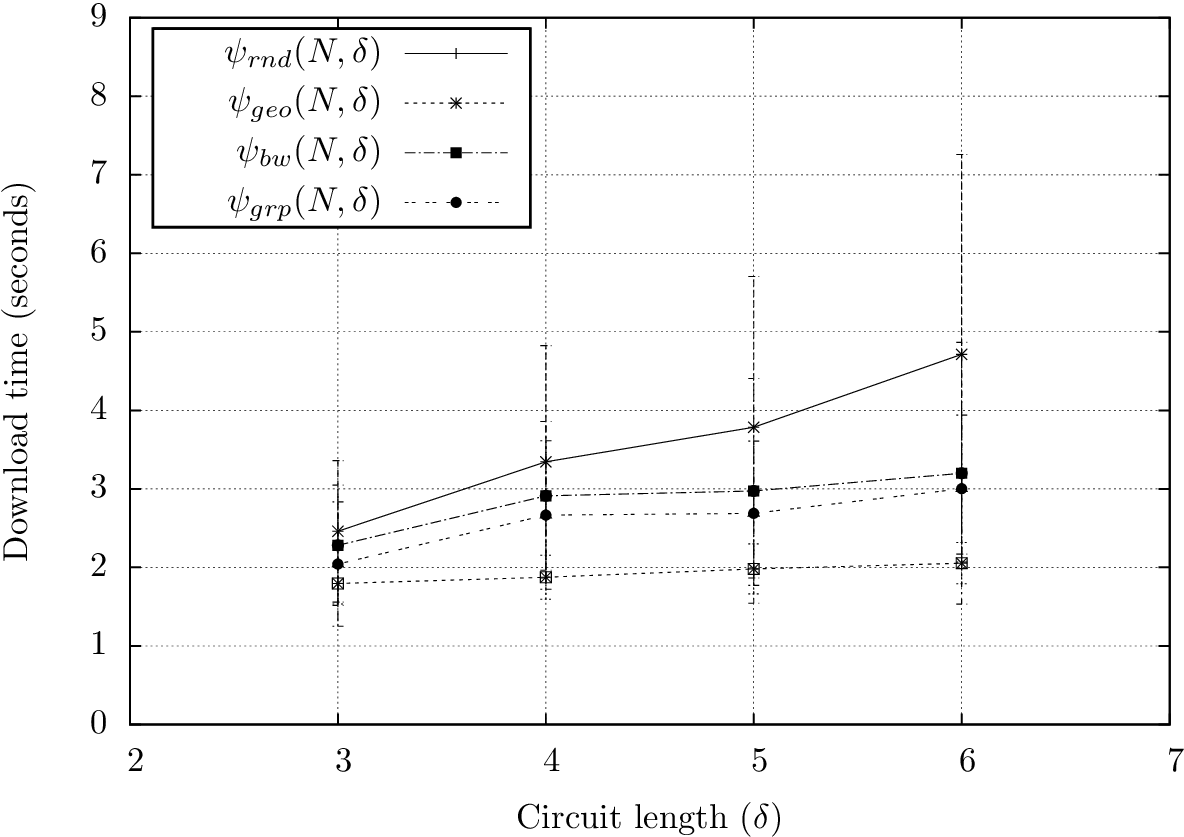}
      \label{fig:res-150}}
   ~~\\
   \subfloat[Web size of 320KB]{
      \includegraphics[width=8.25cm]{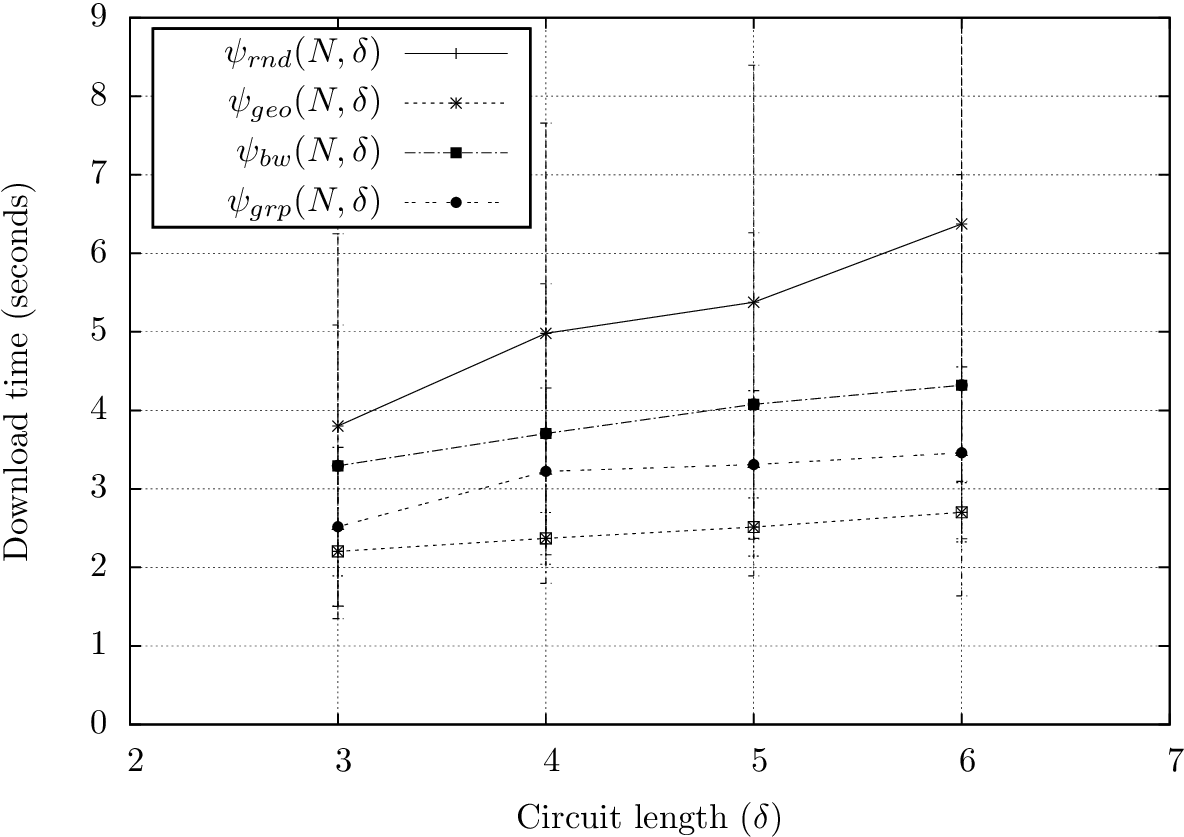}
      \label{fig:res-320}}
   \caption{Experimental results}
   \label{fig:expresults}
\end{figure}

Furthermore, two additional nodes outside the PlanetLab network are
used in our experiments. One of them is based on an Intel Core2 Quad
Processor at 2.66GHz with 6GB of RAM and a Gentoo GNU/Linux Operating
System with a 3.2.9 kernel. This one is used as the \textit{client}
node who handles the construction of Tor circuits for every evaluated
strategy. For this purpose, this node runs also our own specific
software application, hereinafter denoted as \texttt{torspd.py}. A
beta release of \texttt{torspd.py}, written in Python 2.6.6, can be
downloaded at \url{http://github.com/sercas/torspd}. The
\texttt{torspd.py} application relies on the TorCtl Python bindings
\cite{Mathewson2011} ---a Tor controller software to support path
building and various constraints on node and path selection, as well
as statistic gathering. Moreover, \texttt{torspd.py} also benefits
from the package NetworkX \cite{Hagberg2008} for the creation,
manipulation, and analysis of graphs. The client node is not only in
charge of the circuit construction given a certain strategy, but also
of attaching an initiated HTTP connection to an existing circuit. To
accomplish this, the node uses \texttt{torspd.py} to connect to an
special port of the local Tor software called the \textit{control
  port}, and which allows to command the operations. The client node
includes an additional software ---also based on Python--- capable of
performing HTTP queries through our private Tor network by using a
SOCKS5 connection against the local Tor client. This software, called
\texttt{webspd.py}, is also able to obtain statistics results about
the launched queries in order to evaluate the performance of the
algorithms implemented in \texttt{torspd.py}. Finally,
\texttt{webspd.py} performs every HTTP query making use directly of
the IP address of the destination server; consequently, any
perturbation introduced by a DNS resolution is avoided in our
measurements. The second node outside the PlanetLab network is based
on an Intel Xeon Processor at 2.00GHz with 2GB of RAM and a Debian
GNU/Linux Operating System with a 2.6.26 kernel. This node is
considered as the \textit{destination server}, and includes an HTTP
server based on Apache, version 2.2.21. The conceptual infrastructure
used to carry out our experiments is illustrated in Figure
\ref{fig:tor-arch}.

With the purpose of obtaining extrapolative results, we consider in
our testbed the outcomes reported in \cite{Ramachandran2010}. This
report, based on the analysis of more than four billion Web pages,
provides estimations of the average size of current Internet sites, as
well as the average number of resources per page and other interesting
metrics. Our testbed is built bearing in mind these premises, so that
it is close enough to a real Web environment. This way, the analysed
strategies (i.e., random selection, geographical selection, bandwidth
selection, and graph of latencies selection) are evaluated based on
three different series of experiments that vary the Web page sizes.
More precisely, the client node requests via our private PlanetLab Tor
network Web pages of, respectively, 50KB, 150KB and 320KB of size
---being the last one the average size of a Web page according to the
aforementioned report. The length of the circuits is seen as another
variable in our testbed. More precisely, the different strategies are
evaluated with Tor circuits of length three, four, five and six. Every
experiment is repeated 100 times, from which we obtain the minimum,
maximum and average time needed to download the corresponding Web
pages. Likewise, the standard deviation is computed for every test.
The obtained numerical results are presented in
Tables~\ref{table:exp_rnd},~\ref{table:exp_geo},~\ref{table:exp_bw}
and~\ref{table:exp_graph}, and also depicted graphically in
Figure~\ref{fig:expresults}. In the sequel, we use these results to
analyse the performance of every strategy in terms of transmission
times and degree of anonymity.

\subsection{Random selection of nodes strategy evaluation}
\label{sec:rnd_evaluation}
As previously exposed in Theorem \ref{theo:rnd}, the random selection
of nodes strategy is the best one from the point of view of the degree
of anonymity, since it achieves the maximum possible value.
Nevertheless, this selection of nodes methodology suffers from an high
penalty in terms of latency in accordance with the extrapolated
results of our evaluation. As it can be inferred from the analysis of
the numerical outcomes, and reflected in Figure \ref{fig:expresults},
the random selection algorithm exhibits the worst transmission times,
regardless of the size of the site or the length of the circuit used.
This can be explained by the random nature of this strategy. Indeed,
by selecting the nodes at random, the strategy can incur in some
problems which affect directly to the latency of a computed circuit,
such as a big distance between the involved nodes (in terms of
countries, i.e., routers), a network congestion in a part of the
circuit \cite{huali2012}, or a selection of nodes with limited
computational resources, among others. It is clear that all these
drawbacks are hidden to the strategy and explain the obtained results.
Moreover, all these problems are reflected in the standard deviation
of the measurements, which is the higher one compared with the other
alternatives.

\begin{table}[hptb]
\centering
\begin{tabular}{|c|c|c|c|c|}
  \hline
   \multicolumn{5}{|c|}{$\psi_{rnd}(N, \delta)$, $d_{rnd} = 1.0$, Web size 50KB} \\\hline
   Circ. length & Min. & Max. & Avg. & Std. dev. \\\hline
   $\delta=3$ & 0.95094203949 & 3.38077807426 & 1.84956678152 & 0.58107725003 \\\hline
   $\delta=4$ & 1.14792490005 & 7.46992301941 & 2.56735023022 & 1.03927644851 \\\hline
   $\delta=5$ & 1.13161778450 & 12.7252390385 & 3.13187572718 & 1.69722167190 \\\hline
   $\delta=6$ & 1.57145905495 & 14.6901309490 & 3.56973065615 & 2.06960596616
   \\\hline

   \multicolumn{5}{|c|}{$\psi_{rnd}(N, \delta)$, $d_{rnd} = 1.0$, Web size 150KB} \\\hline
   Circ. length & Min. & Max. & Avg. & Std. dev. \\\hline
   $\delta=3$ & 0.970992088318 & 5.70451307297 & 2.46016269684 & 0.901269612931 \\\hline
   $\delta=4$ & 1.081045866010 & 12.0326070786 & 3.34545367479 & 1.478886535440  \\\hline
   $\delta=5$ & 1.624027013780 & 16.0551090240 & 3.78437126398 & 1.918732505410 \\\hline
   $\delta=6$ & 2.279263019560 & 11.5805990696 & 4.71352141102 & 2.544477101520
   \\\hline

   \multicolumn{5}{|c|}{$\psi_{rnd}(N, \delta)$, $d_{rnd} = 1.0$, Web size 320KB} \\\hline
   Circ. length & Min. & Max. & Avg. & Std. dev. \\\hline
   $\delta=3$ & 1.49153804779 & 13.2033219337 & 3.79921305656 & 2.45165379541 \\\hline
   $\delta=4$ & 1.84271001816 & 15.2616338730 & 4.98011079788 & 2.67792560196 \\\hline
   $\delta=5$ & 1.73619008064 & 17.1969499588 & 5.37626729012 & 3.01781647919 \\\hline
   $\delta=6$ & 2.16737580299 & 17.8402540684 & 6.37420113325 & 3.27889183837 \\\hline
\end{tabular}
\caption{Random selection of nodes strategy ($\psi_{rnd}$) results}
\label{table:exp_rnd}
\end{table}

\begin{table}[hptb]
\centering
\begin{tabular}{|c|c|c|c|c|}
  \hline
   \multicolumn{5}{|c|}{$\psi_{geo}(N, \delta)$, $d_{geo} \approx 0.7157$, Web size 50KB} \\\hline
   Circ. length & Min. & Max. & Avg. & Std. dev. \\\hline
   $\delta=3$ & 0.913872003555 & 2.36748099327 & 1.31694087505 & 0.219359721740 \\\hline
   $\delta=4$ & 1.083739995960 & 2.03739213943 & 1.49165359974 & 0.189194613865 \\\hline
   $\delta=5$ & 1.157481908800 & 2.17184281349 & 1.56993633509 & 0.220167861127 \\\hline
   $\delta=6$ & 1.200492858890 & 2.63958501816 & 1.71368015051 & 0.234977785757
   \\\hline

   \multicolumn{5}{|c|}{$\psi_{geo}(N, \delta)$, $d_{geo} \approx 0.7157$, Web size 150KB} \\\hline
   Circ. length & Min. & Max. & Avg. & Std. dev. \\\hline
   $\delta=3$ & 1.38168692589 & 2.68786311150 & 1.79467165947 & 0.260276001481 \\\hline
   $\delta=4$ & 1.27939105034 & 2.92536497116 & 1.87463890314 & 0.281488220772 \\\hline
   $\delta=5$ & 1.33843898773 & 3.71059083939 & 1.98130603790 & 0.318113252410 \\\hline
   $\delta=6$ & 1.40922594070 & 3.28039193153 & 2.05482839346 & 0.261217096578
   \\\hline

   \multicolumn{5}{|c|}{$\psi_{geo}(N, \delta)$, $d_{geo} \approx 0.7157$, Web size 320KB} \\\hline
   Circ. length & Min. & Max. & Avg. & Std. dev. \\\hline
   $\delta=3$ & 1.41799902916 & 2.93465995789 & 2.20432470083 & 0.310828573513 \\\hline
   $\delta=4$ & 1.54156398773 & 3.33606600761 & 2.37035997391 & 0.329438846284 \\\hline
   $\delta=5$ & 1.88031601906 & 4.10431504250 & 2.51430423737 & 0.370494801277 \\\hline
   $\delta=6$ & 1.64570999146 & 3.89323496819 & 2.70262962818 & 0.376313686885 \\\hline
\end{tabular}
\caption{Geographical selection strategy ($\psi_{geo}$) results}
\label{table:exp_geo}
\end{table}

\begin{table}
\centering
\begin{tabular}{|c|c|c|c|c|}
  \hline
   \multicolumn{5}{|c|}{$\psi_{bw}(N, \delta)$, $d_{bw} \approx 0.9009$, Web size 50KB} \\\hline
   Circ. length & Min. & Max. & Avg. & Std. dev. \\\hline
   $\delta=3$ & 0.964261054993 & 5.12318110466 & 1.86709306002 & 0.789060168081 \\\hline
   $\delta=4$ & 1.078310012820 & 5.41474699974 & 2.36407416582 & 0.859666129425 \\\hline
   $\delta=5$ & 1.060457944870 & 6.92380499840 & 2.63418945789 & 1.128347022810 \\\hline
   $\delta=6$ & 1.278292894360 & 12.7536408901 & 3.03272451162 & 1.882337407440

   \\\hline

   \multicolumn{5}{|c|}{$\psi_{bw}(N, \delta)$, $d_{bw} \approx 0.9009$, Web size 150KB} \\\hline
   Circ. length & Min. & Max. & Avg. & Std. dev. \\\hline
   $\delta=3$ & 1.26475811005 & 7.09091401100 & 2.28234255314 & 0.765374484505 \\\hline
   $\delta=4$ & 1.23797798157 & 6.80870413780 & 2.91089500189 & 0.947719280103 \\\hline
   $\delta=5$ & 1.45632719994 & 12.6443610191 & 2.97445464373 & 1.431690789930 \\\hline
   $\delta=6$ & 1.27809882164 & 12.7246098518 & 3.19875429869 & 1.666334473980

   \\\hline

   \multicolumn{5}{|c|}{$\psi_{bw}(N, \delta)$, $d_{bw} \approx 0.9009$, Web size 320KB} \\\hline
   Circ. length & Min. & Max. & Avg. & Std. dev. \\\hline
   $\delta=3$ & 1.49932813644 & 12.9250459671 & 3.29500451326 & 1.79104222251 \\\hline
   $\delta=4$ & 1.52931094170 & 13.7227480412 & 3.70603173733 & 1.90767488259 \\\hline
   $\delta=5$ & 1.66296601295 & 17.3828690052 & 4.07738301039 & 2.18405609668 \\\hline
   $\delta=6$ & 2.04065585136 & 20.1761889458 & 4.32070047140 & 2.68160673888
   \\\hline
\end{tabular}
\caption{Bandwidth selection strategy ($\psi_{bw}$) results}
\label{table:exp_bw}
\end{table}

\begin{table}
\centering
\begin{tabular}{|c|c|c|c|c|}
  \hline
   \multicolumn{5}{|c|}{$\psi_{grp}(N, \delta)$, $d_{grp}$ (c.f. Section \ref{sec:graph_evaluation}), Web size 50KB} \\\hline
   Circ. length & Min. & Max. & Avg. & Std. dev. \\\hline
   $\delta=3$ & 0.935021877289 & 3.61296200752 & 1.59488223791 & 0.545028374794 \\\hline
   $\delta=4$ & 0.998504877090 & 3.74897003174 & 1.77225045919 & 0.548956074123 \\\hline
   $\delta=5$ & 1.195134162900 & 4.21774697304 & 2.02931211710 & 0.576776679346 \\\hline
   $\delta=6$ & 1.267808914180 & 3.35924196243 & 2.18245174408 & 0.502899662482
   \\\hline

   \multicolumn{5}{|c|}{$\psi_{grp}(N, \delta)$,  $d_{grp}$ (c.f. Section \ref{sec:graph_evaluation}). Web size 150KB} \\\hline
   Circ. length & Min. & Max. & Avg. & Std. dev. \\\hline
   $\delta=3$ & 1.112107038500 & 5.53429508209 & 2.04227621531 & 0.790901626275 \\\hline
   $\delta=4$ & 1.290552854540 & 5.68215894699 & 2.66674958944 & 0.944641197284 \\\hline
   $\delta=5$ & 1.163586854930 & 7.41387891769 & 2.68937173843 & 0.917799034111 \\\hline
   $\delta=6$ & 1.550453186040 & 5.40707683563 & 3.00299987316 & 0.935654846647
   \\\hline

   \multicolumn{5}{|c|}{$\psi_{grp}(N, \delta)$,  $d_{grp}$ (c.f. Section \ref{sec:graph_evaluation}), Web size 320KB} \\\hline
   Circ. length & Min. & Max. & Avg. & Std. dev. \\\hline
   $\delta=3$ & 1.502956867220 & 7.29033994675 & 2.51847231626 & 1.009688576850 \\\hline
   $\delta=4$ & 1.498482227330 & 6.52234792709 & 3.22330027342 & 1.061893260420 \\\hline
   $\delta=5$ & 1.734797000890 & 6.73247194290 & 3.31047295094 & 0.940285391625 \\\hline
   $\delta=6$ & 1.689666986470 & 7.89933013916 & 3.46063615084 & 1.094579395080

   \\\hline
\end{tabular}
\caption{Graph of latencies selection strategy ($\psi_{grp}$) results}
\label{table:exp_graph}
\end{table}

\subsection{Geographical selection of nodes strategy evaluation}
\label{sec:geo_evaluation}

The evaluation of the geographical selection of nodes strategy has
been performed by fixing the country and taking into consideration the
node distribution detailed in Table \ref{table:plabs_dist}. United
States was selected in accordance to the country where the client node
resides. Therefore, we can calculate the anonymity degree for this
strategy by recalling its related expression introduced in Section
\ref{sec:geoip_selection}:
\begin{equation*}
d_{geo} = \frac{log_2(m)}{log_2(n)}=\frac{log_2(27)}{log_2(100)}\approx 0.7157
\end{equation*}
As we can observe, the degree of anonymity has dropped significantly
when we compare it with the results of the other strategies. However,
sacrificing a certain level of anonymity incurs in a drastic fall of
the latency needed to download a Web page, as it can be noticed if we
compare Figures \ref{fig:res-50}, \ref{fig:res-150} and
\ref{fig:res-320}. In fact, this selection of nodes methodology
provides the best performance in terms of the time required to
download a Web page among the other alternatives. It is also
interesting to remark the fact that the standard deviation of the time
measured in this method remains nearly constant regardless of the
circuit length and the size of the Web page. This seems reasonable
since the more geographically near are the nodes, the less random
interferences affect to the whole latency. We can understand this if
we think in terms of the number of networks elements (i.e., routers,
switches, etc.) involved in the TCP/IP routing process between every
pair of nodes. Thus, a pair of nodes which belong to the same country
will be interconnected through less network elements compared to two
nodes which belong to different countries and, as a consequence, the
latency will be more stable along time. This can be an interesting
fact, since the penalty introduced by the use of Tor affects less to
the psychological perception of the user when browsing the Web
\cite{Kopsell2006}. Nevertheless, the anonymity degree of this
strategy is strongly tied to the fixed country, since ---as we pointed
out in Theorem \ref{theo:bw2}--- the less nodes belonging to the
country, the less anonymity degree is provided.

\subsection{Bandwidth selection of nodes strategy evaluation}
\label{sec:bw_evaluation}

The anonymity degree of the bandwidth selection of nodes strategy has
been computed empirically according to its associated formula (cf.
Section \ref{sec:bw_selection} for details). In particular, the
\texttt{torspd.py} application was in charge of obtaining the
bandwidth of every node of our private Tor network and of calculating
the anonymity degree. Thus, the anonymity degree when the evaluation
of this strategy was performed was approximately 0.9009. It is
important to highlight that, in spite of the fixed bandwidth specified
in the configuration, the bandwidth of every onion router is estimated
periodically by the Tor software running at every node, and provided
later to \texttt{torspd.py} through the directory servers. Indeed, if
we think that the established bandwidth of a node through its
configuration does not necessarily correspond to the real value, then
the anonymity degree can change in time in comparison to the previous
strategies.

From the viewpoint of the latency results, we can observe how the
bandwidth selection of nodes strategy improves the values respect to
the random strategy by sacrificing some degree of anonymity. However,
it does not achieve the transmission times of the geographical
methodology. The reason for that is because this strategy does not
take into account important networking aspects, such as network
congestion, number of routers, etc., that also impact the transmission
times. Therefore, it is fairly reasonable that this methodology is
more susceptible to networking problems, resulting in an increase of
the eventual transmission time results. This is also corroborated by
the standard deviation results, noting the lack of stability of
the results. In fact, the transmission times increase as the size
of the Web page or the length of the circuit also increase.

\subsection{Graph of latencies strategy evaluation}
\label{sec:graph_evaluation}
The experimental evaluation of our proposal has been performed after
the establishment of the parameters of its related algorithms. In
particular, they were $\Delta t=5$, $m=3$, $k=300$ and $max\_iter=5$.
Furthermore, the \textit{Latency Computation Process} was launched two
hours before the execution of \texttt{webspd.py}, leading to an
analytical graph with a set of more than 3,000 edges, and which
represents a density value of, approximately, 0.67. At this moment,
the \texttt{torspd.py} estimated the degree of anonymity in accordance
to the formula presented in Section \ref{subsec:dgrp}. Since such
equation depends on the length of the circuit, the anonymity degree
was estimated for lengths 3, 4, 5 and 6, giving the results of 0.9987,
0.9984, 0.9982 and 0.9981, respectively. As occurs with the previous
strategy, the degree of anonymity is dynamic over time, and in this
case depends on the connectivity of the analytical graph.
Nevertheless, the anonymity degree was not estimated again during the
evaluation tests.

Function $c_t$ was implemented by means of the construction of random
circuits of length $m$. Such circuits are not used as anonymous
channels for Web transmissions, but to estimate the latencies of the
edges. This is possible since during the construction of a circuit,
every time a new node is added to the circuit, the \textit{Latency
  Computation Process} is notified. Hence, it is easy to determine the
latency of an edge by subtracting the time instants of two nodes
added consecutively to a certain circuit. Regarding this \textit{modus
  operandi} of measuring the latencies, it is interesting to highlight
two aspects. The first one is that it meets the restriction of
estimating the latencies secretly; and the second one is that it not
only measures the latencies in relation the network solely, but also
takes into consideration delays motivated by the status of the nodes
or its resources limitations. This way, our proposal models indirectly
some negative issues which the other strategies do not reflect,
leading to an improvement of the transmission times as the obtained
results evidence.

By comparing the results of the previous strategies with the current
one, we can observe how our new proposal exhibits a better trade-off
between degree of anonymity and transmission latency. Particularly,
from the perspective of the transmission times, our proposal is quite
close to those from the geographical selection strategy, while it
provides a higher degree of anonymity. Indeed, if we compare our
strategy from the anonymity point of view, we can observe that only
the random selection of nodes criteria overcomes our new strategy,
but, as already mentioned, by sacrificing considerably the
transmission time performance.

\section{Related Work}
\label{sec:sota}

The use of entropy-based metrics to measure the anony\-mity degree of
infrastructures like Tor was simultaneously established by Diaz \textit{et al.}
\cite{Diaz2002} and Serjantov and Danezis \cite{Danezis2002}. Since then,
several other authors have proposed alternative measures \cite{Hamel2011}.
Examples include the use of the min entropy by Shmatikov and Wang in
\cite{Shmatikov2006}, and the Renyi entropy by Clau{\ss} and Schiffner in
\cite{Clauss2006}. Other examples include the use of combinatorial measures by
Edman \textit{et al.} \cite{Edman2007}, later improved by Troncoso \textit{et
al.} in \cite{Troncoso2008}. Snader and Borisov proposed in \cite{Snader2008}
the use of the Gini coefficient, as a way to measure inequalities in the circuit
selection process of Tor. Murdoch and Watson propose in \cite{Murdoch2008} to
asses the bandwidth available to the adversary, and its effects to degrade the
security of several path selection techniques.

With regard to literature on selection algorithms, as a way to improve
the anonymity degree while also increasing performance, several
strategies have been reported. Examples include the use of
reputation-based strategies \cite{Bauer2007}, opportunistic weighted
network heuristics \cite{Snader2008,Snader2009}, game theory
\cite{Zhang2010}, and system awareness \cite{Edman2009}. Compared to
those previous efforts, whose goal mainly aim at reducing overhead via
bandwidth measurements while addressing the classical threat model of
Tor \cite{Syverson2001}, our approach takes advantage of latency
measurements, in order to best balance anonymity and performance.
Indeed, given that bandwidth is simply self-reported on Tor, regular
nodes may be mislead and their security compromised if we allow nodes
from using fraudulent bandwidth reports during the construction of Tor
circuits \cite{Bauer2007,setop2008}.

The use of latency-based measurements for path selection on anonymous
infrastructures has been previously reported in the literature. In
\cite{Sherr2009}, Sherr \textit{et al.} propose a link-based path
selection strategy for onion routing, whose main criterion relies, in
addition to bandwidth measures, on network link characteristics such
as latency, jitter, and loss rates. This way, false perception of
nodes with high bandwidth capacities is avoided, given that
low-latency nodes are now discovered rather than self-advertised.
Similarly, Panchenko and Renner \cite{Panchenko2009} propose in their
work to complement bandwidth measurements with ro\-und trip time during
the construction of Tor circuits. Their work is complemented by
practical evaluations over the real Tor network and demonstrate the
improvement of performance that such latency-based strategies achieve.
Finally, Wang \textit{et al.} \cite{Wang2011,Wang2012} propose the
use of latency in order to detect and prevent congested nodes, so that
nodes using the Tor infrastructure avoid routing their traffic over
congested paths. In contrast to these proposals, our work aims at
providing a defence mechanism. Our latency-based approach is
considered from a node-centred perspective, rather than a
network-based property used to balance transmission delays. This way,
adversarial nodes are prevented from increasing their chances of
relying traffic by simply presenting themselves as low-latency nodes,
while guaranteeing an optimal propagation rate by the remainder nodes
of the system.

\section{Conclusion}
\label{sec:conclusion}
We addressed in this paper the influence of circuit construction strategies on
the anonymity degree of the Tor ({\em The onion router}) anonymity
infrastructure. We evaluated three classical strategies, with respect to their
de-anonymisation risk and latency, and regarding its performance for anonymising
Internet traffic. We then presented the construction of a new circuit selection
algorithm that considerably reduces the success probability of linking attacks
while providing enough performance for low-latency services. Our experimental
results, conducted on a real-world Tor deployment over PlanetLab confirm the
validity of the new strategy, and shows that it overperforms the classical ones.

\appendix

\section{Number of walks of length $\lambda$ between any two distinct vertices
of a $K_n$ graph}
\label{sec:appendix2}

Let $K_n$ be a complete graph with $n$ vertices and $\frac{n(n-1)}{2}$ edges,
such that every pair of distinct vertices is connected by a unique edge. Then, a
walk in $K_n$ of length $\lambda$ from vertex $v_1$ to vertex $v_{\lambda +1}$
corresponds to the following sequence:
\begin{center}
$\underbrace{v_1 \xrightarrow{e_1} v_2 \xrightarrow{e_2} v_3
    \xrightarrow{e_3} v_4 \xrightarrow{e_4}... \xrightarrow{e_{\lambda
        -1}} v_{\lambda} \xrightarrow{e_{\lambda}}
    v_{\lambda+1}}_{\text{walk in G of length $\lambda$}}$
\end{center}
such that each $v_i$ is a vertex of $K_n$, each $e_j$ is an edge of
$K_n$, and the vertices connected by $e_i$ are $v_i$ and $v_{i+1}$.

Let $A$ be the adjacency matrix of $K_n$, such that $A$ is an $n$-square binary
matrix in which each entry is either zero or one, i.e., every $(i,j)$-entry in
$A$ is equal to the number of edges incident to $v_i$ and $v_j$. Moreover, $A$
is symmetric and circulant \cite{Barry2007}. It has always zeros on the leading
diagonal and ones off the leading diagonal. For example, the adjacency matrix of
a complete graph $K_4$ is always equal to:
\begin{eqnarray*}
A = \begin{bmatrix}
0 & 1 & 1 & 1\\
1 & 0 & 1 & 1\\
1 & 1 & 0 & 1\\
1 & 1 & 1 & 0\\
\end{bmatrix}
\end{eqnarray*}
The total number of possible walks of length $\lambda$ from vertex $v_i$ to
vertex $v_j$ is the $(i,j)$-entry of $A^\lambda$, i.e., the matrix product,
denoted by ($\cdot$), of $\lambda$ copies of $A$ \cite{Stanley2000}. Following
the above example, the number of walks of length $2$ between any two distinct
vertices can be obtained directly from $A^2$, such that
\begin{eqnarray*}
A^2 = A \cdot A =
\begin{bmatrix}
(n-1) & (n-2) & (n-2) & (n-2)\\
(n-2) & (n-1) & (n-2) & (n-2)\\
(n-2) & (n-2) & (n-1) & (n-2)\\
(n-2) & (n-2) & (n-2) & (n-1)\\
\end{bmatrix}
\end{eqnarray*}
which leads to
\begin{eqnarray*}
A^2 = A \cdot A =
\begin{bmatrix}
3 & 2 & 2 & 2\\
2 & 3 & 2 & 2\\
2 & 2 & 3 & 2\\
2 & 2 & 2 & 3\\
\end{bmatrix}
\end{eqnarray*}
Note that any $(i,j)$-entry of $A^2$ (where $i \neq j$) gives the same
number of walks of length $2$ from any two distinct vertex $v_i$ to
vertex $v_j$. The total number of walks of length $2$ between any two
distinct vertices can, thus, be obtained by consecutively adding the
values of every $(i,j)$-entry off the leading diagonal of matrix
$A^2$. In the above example, it suffices to sum $(4(4-1))$ times
(i.e., twice the number of edges in $K_4$) the value $2$ that any
$(i,j)$-entry (where $i \neq j$) has in $A^2$. This amounts to having
exactly $24$ possible walks on any $K_4$ graph.

Therefore, the problem of finding the number of walks of length $\lambda$
between any two distinct vertices of a $K_n$ graph reduces to finding the
$(i,j)$-entry of $A^\lambda$, where $i \neq j$. Indeed, let $a^\lambda_{i,j}$ be
the $(i,j)$-entry of $A^\lambda$. Then, the recurrence relation between the
original adjacency matrix $A$, and the matrix product of up to $\lambda-1$
copies of $A$, i.e.,
\begin{equation}
\label{eq:recursive-relation1}
  A^\lambda = A^{\lambda -1} \cdot A
\end{equation}
with initial conditions:\\
\begin{tabular}{cc}
  \begin{minipage}{4cm}
      \begin{equation*}
        a^2_{i,j} = \left\{
        \begin{array}{ll}
          (n-2) & \text{if}~~i \neq j\\
          (n-1) & \text{if}~~i=j
        \end{array},
        \right.
      \end{equation*}
    \end{minipage}
    &
    \begin{minipage}{4cm}
      \begin{equation*}
        a^1_{i,j} = \left\{
        \begin{array}{ll}
          1 & \text{if}~~i \neq j\\
          0 & \text{if}~~i=j
        \end{array}
        \right.
      \end{equation*}
    \end{minipage}
  \end{tabular}\\

\noindent is sufficient to solve the problem. Notice, moreover, that the result does not
depend on any precise value of either $i$ or $j$. Indeed, it is proved in
\cite{Stanley2000} that there is a constant relationship between the
$(i,j)$-entries off the leading diagonal of $A^\lambda$ and the $(i,j)$-entries
on the leading diagonal of $A^\lambda$. More precisely, let $t^\lambda$ be any
$(i,j)$-entry off the leading diagonal of $A^\lambda$ (i.e., $t^\lambda =
a^\lambda_{i,j}$ such that $i \neq j$). Let $d^\lambda$ be any $(i,i)$-entry on
the leading diagonal of $A^\lambda$ (i.e., $t^\lambda = a^\lambda_{i,i}$). Then,
if we subtract $t^\lambda$ from $d^\lambda$, the results is always equal to
$(-1)^\lambda$. In other words, if we express $A^\lambda$ as follows:
\begin{equation*}
  A^\lambda = [ a^\lambda_{i,j} ] = \left\{
    \begin{array}{ll}
      t^\lambda & \text{if}~~i \neq j\\
      d^\lambda & \text{if}~~i=j
    \end{array}
    \right.
\end{equation*}
then $t^\lambda = d^\lambda + (-1)^\lambda$. We can now use the recurrence
relation shown in Equation (\ref{eq:recursive-relation1}) to derive the
following two results:
\begin{eqnarray}
    t^\lambda &=& (n-2) t^{\lambda -1} + d^{\lambda-1}
\label{eq:recursive-relation2}\\
    d^\lambda &=& (n-1) t^{\lambda -1} \label{eq:recursive-relation3}
\end{eqnarray}
\medskip
with the initial conditions $t^1 = 1$ and $d^1 = 0$.

Cumbersome, but elementary, transformations shown in both
\cite{Barry2007} and \cite{Stanley2000} lead us to unfold the two
recurrence relations in both Equation (\ref{eq:recursive-relation2})
and (\ref{eq:recursive-relation3}) to the following two self-contained
expressions:
\begin{eqnarray}
    t^\lambda & = & \frac{(n-1)^{\lambda} - (-1)^{\lambda}}{n}
\label{eq:non-closed-walks} \\
    d^\lambda & = & \frac{(n-1)^{\lambda} + (n-1)(-1)^{\lambda}}{n}
\label{eq:closed-walks}
\end{eqnarray}

To conclude, we can now use Equations (\ref{eq:non-closed-walks}) and
(\ref{eq:closed-walks}) to express the total number of closed and
non-closed walks in the complete graph $K_n$ by simply adding to them
twice the number of edges in the graph, i.e., $n(n-1)$. From
Equation (\ref{eq:non-closed-walks}) we have now the value of any
$(i,j)$-entry in $A^\lambda$ such that $i \neq j$. As we did
previously in the example of the complete graph $K_4$, the total
number of walks of length $\lambda$ between any two distinct vertices
can be obtained by consecutively adding $n(n-1)$ times the values of
any of the $(i,j)$-entries off the leading diagonal of matrix
$A^\lambda$. This amounts to having exactly $n(n-1) \cdot t^\lambda$
which simplifying leads to:
\begin{eqnarray}
  ((n-1)((n-1)^{\lambda} - (-1)^{\lambda}))
\label{eq:final-expression}
\end{eqnarray}
possible walks of length $\lambda$ on any $K_n$ graph.

\end{document}